\newcommand{\ignore}[1]{}
\theoremstyle {plain}
\newtheorem{theorem}{Theorem}
\newtheorem {corollary}[theorem]{Corollary}
\newtheorem{definition}[theorem]{Definition}
\newtheorem{lemma}[theorem]{Lemma}
\newcommand{\ita}[1]{\textit{#1}}
\newcommand{\into}{\longrightarrow}
\newcommand{\ifff}{\Longleftrightarrow}
\newcommand{\la}{\langle}
\newcommand{\ra}{\rangle}
\newcommand{\andd}{\wedge}
\newcommand{\f}{\mathbb}
\newcommand{\rss}{\la\psi_i:i\in \lambda\ra}
\newcommand{\newts}{{\sf newts }}
\newcommand{\update}{{\sf update }}
\newcommand{\scan}{{\sf scan }}
\newcommand{\fscan}{{\sf Fscan }}
\newcommand{\classify}{{\sf classify }}
\newcommand{\newflag}{{\sf newflag }}
\newcommand{\findmax}{{\sf find\_max }}
\newcommand{\scate}{{\sf scate }}
\begin{document}

\title{The $F$-snapshot Problem.}
\author{Gal Amram \\ Department of Computer
Science,\\ Ben-Gurion University, Beer-Sheva, Israel}\date{}

\maketitle

\begin{abstract}

Aguilera, Gafni and Lamport introduced the signaling problem in \cite{AGL1}. In this problem, two processes numbered $0$ and $1$ can call two procedures: \update and {\sf Fscan}. A parameter of the problem is a two-variable function $F(x_0,x_1)$. Each process $p_i$ can assign values to variable $x_i$ by calling {\sf update($v$)} with some data value $v$, and compute the value: $F(x_0,x_1)$ by executing an \fscan procedure. The problem is interesting when the domain of $F$ is infinite and the range of $F$ is finite. In this case, some ``access restrictions'' are imposed that limit the size of the registers that the \fscan procedure can access.  

Aguilera et al. provided a non-blocking solution and asked whether a wait-free solution exists. A positive answer can be found in \cite{sig}. The natural generalization of the two-process signaling problem to an arbitrary number of processes turns out to yield an interesting generalization of the fundamental snapshot problem, which we call the $F$-snapshot problem. In this problem $n$ processes can write values to an $n$-segment array (each process to its own segment), and can read and obtain the value of an $n$-variable function $F$ on the array of segments. In case that the range of $F$ is finite, it is required that only bounded registers are accessed when the processes apply the function $F$ to the array, although the data values written to the segments may be taken from an infinite set. We provide here an affirmative answer to the question of Aguilera et al. for an arbitrary number of processes. Our solution employs only single-writer atomic registers, and its time complexity is $O(n\log n)$, which is also the time complexity of the fastest snapshot algorithm that uses only single-writer registers.

\end{abstract}

\section{Introduction}

In this paper we introduce a solution to the $F$-snapshot problem, which is a generalization of the well-studied snapshot problem (introduced independently by Afek et al. \cite{snap1,snap2}, by Anderson \cite{comp} and by Aspnes and Herlihy \cite{PRAM}). A snapshot object involves $n$ asynchronous processes that share an array of $n$ segments. Each process $p_i$ can write values to the $i$-th segment by invoking an \update procedure with a value taken from some range of values: $\mathit{Vals}$, and can scan the entire array by invoking an instantaneous \scan procedure. For any function $F:\mathit{Vals}^n\to D$ (where $D$ is any set and $\mathit{Vals}^n$ is the set of $n$-tuples of members of $\mathit{Vals}$) the $F$-snapshot variant differs from the snapshot problem in that the \fscan operation has to return the value $F(v_0,\ldots,v_{n-1})$ of the instantaneous segment values $v_0,\dots,v_{n-1}$. That is in comparison to the standard \scan operation, which returns the vector of values that the segments store at an instantaneous moment. The $F$-snapshot problem is interesting only if we impose an additional requirement, without which it can be trivially implemented by applying the function $F$ (assumed to be computable) to the values returned by the standard \scan operation. This additional requirement, for the case $n=2$, was suggested by Aguilera, Gafni and Lamport \cite{AGL1} (see also \cite{AGL2}) in what they called there the signaling problem.
Thus, our $F$-snapshot problem is a generalization of both the standard snapshot problem and the signaling problem (generalizing this problem from the $n=2$ case to the general case of arbitrary $n$).

In the signaling problem the set $\mathit{Vals}$ can be assumed to be infinite, and the set $D$ (the range of $F$) is finite (and small). The requirement is that an \fscan operation uses only bounded registers. That is, registers that can store only finitely many different values (the \update operations may access unbounded registers). The signaling problem was formulated just for two processes in \cite{AGL1}, and a wait-free solution for this problem was left there as an open problem. Thus, solving the general $F$-snapshot sets quite a challenge. A wait-free solution to the signaling problem is given in \cite{sig}, and here we present a (wait free) solution to the general $F$-snapshot problem.

As the domain of $F$ may be infinite, an \update operation cannot access only bounded registers. Furthermore, the $F$-snapshot problem generalizes the signaling problem which provides a solution to the mailbox problem. Abraham and Amram \cite{uri-gal} showed that even the mailbox problem cannot be solved while only bounded registers are employed.

In \cite{AGL1}, the signaling problem is justified for efficiency reasons. We consider a case in which the processes write values to their segments taken from an infinite range, but they are interested in some restricted data regarding these values (for example, which process invoked the largest value, how many different values there are etc.). An $F$-snapshot implementation may be more efficient in these cases than a snapshot implementation, since it is not necessary to scan the entire array for extracting the required information, and it suffices to read only bounded registers. Efficiency is mostly guaranteed when the \fscan operations are likely to be invoked much more frequently than the \update procedures. 

Now we describe the $F$-snapshot problem formally. Let $P=\{p_0,\dots,p_{n-1}\}$ be a set of $n$-asynchronous processes that communicate through shared registers and let
$$F:\mathit{Vals}^n\into D$$  
be an $n$-variables computable function from a (possibly infinite) domain $\mathit{Vals}$, into $D=Rng(F)$. The problem is to implement two procedures:
\begin{enumerate}
\item $\update(v)$ - invoked with an element $v\in \mathit{Vals}$. This procedure writes $v$ to the $i$-th segment of an $n$-array $A$, when invoked by $p_i$.
\item $\fscan$ - returns a value $d\in D$. This procedure returns $F(A[0],\dots,A[n-1])$, in contrast to a $\scan$ procedure which returns the entire array: $(A[0],\dots,A[n-1])$.
\end{enumerate}

The implementation needs to satisfy the following requirements:
\begin{enumerate}
\item All procedures are wait free. That is, each procedure eventually returns, if the executing process keep taking steps.
\item If $D$ is finite, then only bounded registers are accessed during \fscan operations.
\end{enumerate}
The $F$-snapshot problem can be studied under various communication restrictions. As an example, the $f$-array implementation by Jayanti \cite{JAY} solves the $F$-snapshot problem as well, when the LL/SC primitive is employed. However, the LL/SC operation cannot be implemented from read/write operations \cite{HER}. Here we assume the simplest means of communication:

\begin{enumerate}
\item[3.] Only single-writer multi-reader atomic registers are applied.
\end{enumerate} 

For correctness of $F$-snapshot implementations, we adapt the well known Linearizability condition, formulated by Herlihy and Wing \cite{Lin}. Roughly speaking, an $F$-snapshot algorithm is correct if for any of its executions the following hold: Each procedure execution can be identified with a unique moment during its actual execution (named the linearization point), such that the resulting sequential execution belongs to a set of correct sequential executions: the sequential specification of the object. The sequential specification of the $F$-snapshot object includes all executions of the following atomic implementation, presented by a code for process $p_i$. The code uses an array $A[0..n-1]$.

\begin{center}

\begin{tabular}{l l}
\begin{minipage}[h]{60mm} 
\vspace{6mm}
{\large{$\update(v)$}}
\begin{tabbing}
***\=**\=**\=**\=**\=**\=*\=*\=*\=\kill
1.\> $A[i]:=v$\\

\end{tabbing}
\vspace{1mm}
\end{minipage}

&
\begin{minipage}[h]{60mm}
\vspace{6mm}
{\large $\fscan()$} 
\begin{tabbing}
***\=**\=**\=**\=**\=**\=*\=*\=*\=\kill
1.\> return $F(A[0],\dots,A[n-1])$\\
\end{tabbing}
\vspace{1mm}
\end{minipage}\\ 

\end{tabular}
\end{center}

One of the basic methods for proving linearizability is to identify each procedure execution with the execution of one of its actions, and to prove that these linearization points satisfy the requirements. However, in some cases the linearization points are not fixed, and may even be identified with actions executed concurrently by other processes (the queue implementation in \cite{Lin} forms an example of such an algorithm). Hence, this approach is not complete. Therefore, we present the linearizability condition in an equivalent way to the one described above, a way that fits better the correctness proof we provide here for our $F$-snapshot algorithm.

In an execution of an $F$-snapshot algorithm, the procedure executions are partially ordered by the precedence relation $<$. That is, if $A$ and $B$ are procedure executions, $A<B$ means that $A$ ends before $B$ begins. An execution is linearizable if the relation $<$ can be extended to a linear ordering $\prec$ that satisfies the sequential specification, presented in Figure \ref{sequential-specification}. An $F$-snapshot algorithm is correct if all its executions are linearizable.

\begin{figure}[H] 
\label{sequential-specification}
\begin{center}
\fbox{
\begin{minipage}[t]{120mm}

\begin{enumerate}

\item All procedure executions are partitioned into $\update$ and $\fscan$ operations. An $\update$ operation is invoked with a value $v\in \mathit{Vals}$ and an $\fscan$ returns a value $d\in D$.

\item Each procedure execution belongs to a unique process $p_i$, $i<n$.

\item Let $S$ be an \fscan operation and for each $i<n$ assume that $U_i$ is the last $p_i$-\update event that precedes $S$ in $\prec$. Assume that each $U_i$ is invoked with a value $v_i$. Then, $S$ returns $F(v_0,\dots,v_{n-1})$. 

\end{enumerate}

\end{minipage}

}
\end{center}
\caption{$F$-snapshot sequential specification}

\end{figure} 

In this paper, we present a solution to the $F$-snapshot problem. Each operation in our algorithm consists of $O(n\log n)$ actions addressed to the shared registers. Thus, the time complexity of our algorithm equals the snapshot implementation by Attiya and Rachman \cite{nlogn} which is, as far as we know, the fastest published snapshot implementation with single-writer registers.

\section{The $F$-snapshot Algorithm}
\label{algo-section}

First we explain the main ideas behind the algorithm. The reader may want to consider our explanations, while examining the code of the algorithm given in Figure \ref{Algo}, and its local procedures in Figure \ref{procedures}.

The crucial obstacle for solving the problem is that an \fscan procedure cannot access unbounded registers, but it is required to apply the function $F$ on values that are stored in unbounded registers. Thus, the computation of $F$ needs to be done during an execution of an \update operation. When process $p_i$ performs an \update operation invoked with a data value $val$, it writes $val$ into a snapshot object $V$ (line 2), scans this snapshot object (line 3), applies the function $F$ on the view it obtained and stores the outcome in a local variable $ans$ (line 4). Then, before it returns it writes the outcome it obtained into a snapshot object named $Flags$ (line 16). A scanner scans the snapshot object $Flags$ and it needs to choose the most up-to-date value among the values suggested by the processes. We need to provide the $Flags$ object with an additional information, so a scanner could decide correctly which value to return. However, this additional information needs to be taken from a finite range due to the problem limitations.

As a first attempt, one may suggest to use bounded concurrent timestamps \cite{ts} to label update events (see \cite{ts1},\cite{ts2},\cite{ts3}). The intuitive approach is to choose a timestamp after updating $V$ and before scanning $V$, or to use a \scate operation as in \cite{nlogn} and to choose a timestamp immediately after the \scate operation returns. The problem is that a scanner will return a value relaying on the order between the labeling operations and not on the order between the scan events addressed to the snapshot object $V$. Hence, this approach in its simplest form, will not succeed. 

Our goal is to provide the $Flags$ snapshot object with some additional bounded information, so that a scanner will be able to determine the right order between scans addressed to $V$ that precede the writes into the snapshot object $Flags$. Several synchronization mechanisms are used for achieving this goal.

\subsection{The Classify Mechanism}

For determining the ordering between \scan events addressed to $V$, we adapt common technique of counting \update events \cite{nlogn},\cite{LA},\cite{ISS}. When a process performs an \update operation, it increases a counter (line 1) and writes this counter to $V$ together with the data value with which the \update operation was invoked (line 2). After a process scans $V$, it sums these counters to obtain a natural number that reflects how recent its view is (line 9). This approach resembles the snapshot algorithm presented by Israeli, Shaham and Shirazi \cite{ISS}. They used this technique to implement a snapshot algorithm in which the time complexity of the \scan procedure is $O(n)$. In their construction, while executing a \scan operation, the executing process returns the view of the process that presents the latest activity, reflected by the largest sum.

Since an \fscan operation cannot access unbounded registers, we cannot adopt the discussed approach as it was used in \cite{ISS}. In our algorithm, reading these natural numbers is done while executing an \update procedure. The process writes the sum it obtained into a snapshot object named $ViewSum$ (lines 10,12), and scans $ViewSum$ to compare its view with the views obtained by the other processes. Afterwards, it classifies all other processes into two categories: $winners$ - the processes that posses a later view, reflected by a largest sum, and $losers$ - processes with less up-to-date view. This is done by calling the local \classify procedure, when processes id's are used for breaking symmetry. Eventually, these sets of $winners$ and $losers$ will be stored at the segment $Flags[i]$ (lines 15,16).

\subsection{The coloring Mechanism}

A scanner scans the $Flags$ array and it tries to find the most up-to-date view while considering the fields $Flags[i].winner$ and $Flags[i].losers$ for $i=0,\dots,n-1$. For any pair of processes $p_i$ and $p_j$, the scanner tries to understand which process's view is more recent. The problem is that the processes may provide contradicting information. As an example, a scanner may found that $j\in Flags[i].winners$ (which means that $p_i$ thinks that $p_j$'s view is more up-to-date then its view), but it is possible that also $i\in Flags[j].winners$. Namely, it is possible that both $p_i$ and $p_j$ think that the other process knows better.

The coloring mechanism ensures that the problem described above can occur only in some ``typical" executions (with which our next mechanism deals). The \update events by each process alternate between 3 possible colors: 0, 1 or 2 (line 1). Each process posses a three-fields variable, in correspondence to three colors, named $myview$. After a process sums the counters it sees (lines 3,9), it writes the sum it obtained into $myview[color]$ (line 10) and deletes data obtained in its second-previous \update operation (line 11) to erase a confusing information. Then, it writes the value that $myview$ stores into $ViewSum$ (line 12). Now, when process $p_i$ scans $ViewSum$, in each segment $ViewSum[j]$, it finds two integers. These are the sums that $p_j$ computed in its two previous \update events. When $p_i$ executes its local \classify procedure, it also writes the color it saw. For example, it writes $(j,c)$ to $winners$ for $c\in\{0,1,2\}$, if it read from $ViewSum[j][c]$ a number greater than the sum it obtained in line 9 (when id's are taken into account for breaking symmetry). Each process $p_i$ writes the values of its local sets $winners$ and $losers$ to $Flags[i]$, together with the color of the \update operation it is executing.

Coming back to our example, now each process also specifies the color of the \update operation it saw. If a scanner finds that $(j,c)\in Flags[i].winners$ it understands that $p_i$ saw in $ViewSum[j][c]$ an integer larger than the number it obtained. However, if the scanner sees that $Flags[j].color\neq c$, it just disregards $p_i$'s information.

\subsection{Adding Bounded Timestamps}

The coloring mechanism does not prevent entirely the possibility that processes will provide contradicting information. Assume as an example that a scanner finds that $Flags[i].color=c_i$, $Flags[j].color=c_j$, $(j,c_j)\in Flags[i].winners$ and $(i,c_i)\in Flags[j].winners$. Thus, both $p_i$ and $p_j$ claim that the other process is more up-to-date. When such a situation occurs, one of the processes provides reliable information. This is the process that scanned $ViewSum$ later before updating $Flags$.

When such a situation occurs, the processes use timestamps to inform which process is trustworthy. We use a simple timestamps system in which the timestamps are vertices of a a nine-vertices directed graph $G=(V_G,E_G)$. A detailed explanation can be found in chapter 2 of \cite{HS} or in \cite{ts}. The graph $G$ consists of three cycles, each cycle includes three vertices. In addition, there is an edge from each vertex at the $i$-th cycle to each vertex at the $i-1\pmod 3$ cycle. Formally, $V_G=\{(i,j): i,j\in\{0,1,2\}\}$, and there is an edge from $v=(i_1,j_1)$ to $u=(i_2,j_2)$ if $i_1=i_2$ and $j_1=j_2+1\pmod 3$, or $i_1=i_2+1\pmod 3$. The vertices of $G$ are named timestamps, and if $(v,u)\in E$ we say that $v$ dominates $u$ and we write $u<_{ts}v$. Intuitively, $v$ dominates $u$ means that the timestamp $v$ represents a later moment than the timestamp $u$. 

We can see that there no cycles of length two in $G$. In addition, for any two timestamps $v,u$, we can find a timestamp $w$ that dominates both $v$ and $u$. We take a function $next:V_G\times V_G\into V_G$ that satisfies this property. That is, for any timestamps $v,u$: $v<_{ts}next(v,u)$ and $u<_{ts}next(v,u)$.

Any process $p_i$ holds $n$ pairs of timestamps. Each pair consists of a new timestamp and an old timestamp. These pairs are stored in a snapshot object named $VTS$. When $p_i$ executes an \update operation it scans $VTS$ (line 5). Then, against each process $p_j$ it chooses a timestamp that dominates the pair of timestamps it read from $VTS[j][i]$, using the function $next$. $p_i$ stores the timestamp it obtained as its new timestamp, keeps its former timestamp available as its old timestamp and updates $VTS$ (consider lines 5-8 and the local procedure $\newts$). Finally, $p_i$ stores its $n$-vector of pairs of timestamps in $Flags[i]$ while updating the $Flags$ object (line 16).

Now, consider again the situation in which a scanner finds that two processes $p_i$ and $p_j$, provide contradicting information as described earlier. In this case, the scanner checks the timestamps that the processes present. The process that its new timestamp dominates the other process's new timestamp is the reliable one. More precisely, the scanner considers the timestamps $Flags[i].vts[j].new$ and $Flags[j].vts[i].new$. The information provided by the process with the later timestamp is the right information. These timestamps are used only when processes provide contradicting information. In other cases the timestamps do not necessarily reflect the right ordering between the processes' views.

\subsection{The Code}

Now we specify the code of the algorithm and we start by presenting the data structures and the type of the registers and variables. First, the algorithm use 4 snapshot objects.

\begin{enumerate}
\item $V$ - each entry $V[i]$ stores a pair: $(n,val)\in \f N\times \mathit{Vals}$. $val$ is the value with which the \update procedure is invoked, and $n\in \f N$ counts the number of \update operations invoked by $p_i$. Note that as these values are taken from an infinite range, an \fscan operation cannot access this object. Initially each segment stores the value $(0,x_0)$ for some fixed $x_0\in \mathit{Vals}$. 

\item $VTS$ - each entry $VTS[i]$ stores an $n$-array: $vts_i[0..n-1]$ of pairs of timestamps. Thus, each entry $vts_i[j]=(v,u)$ where $v,u$ are timestamps. The first field is denoted $vts_i[j].old$, while the second field is $vts_i[j].new$ i.e. $(v,u)=(vts_i[j].old,vts_i[j].new)$. Initially, each field $VTS[i][j]$ stores $(v_0,v_0)$ for some fixed $v_0\in V_G$.

\item $ViewSum$ - each entry $ViewSum[i]$ is a triple: $viewsum_i[0..2]$ of natural numbers when each entry may also store $null$. That is, $ViewSum[i]\in (\f N\cup\{null\})\times (\f N\cup\{null\})\times (\f N\cup\{null\})$. There are three fields in correspondence to three possible colors of the \update operations. The initial value of each segment $ViewSum[i]$ is $(0,null,null)$.

\item $Flags$ - this is the bounded object scanned during \fscan operations. Each entry $Flags[i]$ stores an element of type $flag$. The $flag$ type consists of five fields:
\begin{enumerate}
\item $flag.color \in \{0,1,2\}$. Initially this field is $0$.

\item $flag.vts$ - an $n$-array of pairs of timestamps. Initially all pairs are $(v_0,v_0)$. Recall that $(v_0,v_0)$ is also the initial value of each $VTS[i][j]$. 

\item $flag.winners$, $flag.losers$ - sets of pairs of the form: $(i,c)\in\{0,\dots,n-1\}\times \{0,1,2\}$. At the $i$-th segment  $flag.winners$ and $flag.losers$ are initialized to $\{(j,0) : i<j\}$ and $\{(j,0): j<i\}$ respectively.

\item $flag.ans\in D$. The initial value of this field is $F(x_0,x_0,\dots,x_0)$. Recall that $x_0\in \mathit{Vals}$ is the initial value of each entry $V[i]$.

\end{enumerate}

\end{enumerate}

Each process use several local variables:

\begin{enumerate}

\item $color\in \{0,1,2\}$. Initially $color=0$.

\item $counter,viewsum \in \f N$. The initial value of these variables is $0$. 

\item $val\in \mathit{Vals}$.

\item $ans\in D$.

\item $myview\in \f (N\cup\{null\})\times \f (N\cup\{null\})\times \f (N\cup\{null\})$. Initially $myview=(0,null,null)$.

\item $winners,losers$ - sets that contain elements from the range $\{0,\dots,n-1\}\times\{0,1,2\}$.

\item $vts_i[0 .. n-1]$ - an $n$-array of pairs of timestamps.

\item $ts.old,ts.new$ - timestamps.

\item Other variables that are used for storing information while scanning the snapshot objects (lines 3,5,13). The type of each such a variable is in correspondence to the type of the objects that are scanned.
\end{enumerate}

\begin{figure}[h]

\begin{tabular}{|l|l|}\hline
\begin{minipage}[h]{60mm} 
\vspace{6mm}
{\large{$\update(val)$}}
\begin{tabbing}
***\=**\=**\=**\=**\=**\=*\=*\=*\=\kill
1.\> $counter:=counter+1$, $color:=counter\mod 3$,\\
2.\> $V.\update(counter,val)$\\
3.\> $(v_0,\dots,v_{n-1})=V.\scan$\\
4.\> $ans:=F(v_0.val,\dots,v_{n-1}.val)$\\
5.\> $(vts_0,\dots,vts_{n-1}):=VTS.\scan$\\
6.\> for $j=0$ to $n-1$ do\\
7.\>\> $vts_i[j]:=\newts(vts_j[i],vts_i[j])$\\
8.\> $VTS.\update(vts_i)$ \\
9.\> $viewsum:=v_0.counter+\dots+v_{n-1}.counter$\\
10.\> $myview[color]:=viewsum$\\
11.\> $myview[color+1\pmod 3]:=null$\\
12.\> $ViewSum.\update(myview)$\\
13.\> $(view_0,\dots,view_{n-1}):=ViewSum.\scan$\\
14.\> $\classify(view_0,\dots,view_{n-1})$ \\
15.\> $flag:=\newflag()$ \\
16.\> $Flags.\update(flag)$\\
\end{tabbing}
\vspace{1mm}
\end{minipage}

&
\begin{minipage}[h]{60mm}
\vspace{4mm}

{\large $\fscan()$} 
\begin{tabbing}
***\=**\=**\=**\=**\=**\=*\=*\=*\=\kill
1.\> $(flag_0,\dots,flag_{n-1}):=Flags.\scan$\\
2.\> $winner:= \findmax(flag_0,\dots,flag_{n-1})$\\
3.\> return $flag_{winner}.ans$\\
\end{tabbing}
\vspace{1mm}
\end{minipage}\\ \hline

\end{tabular}

\caption{code for $p_i$}
\label{Algo}
\end{figure}

The algorithm use 4 local procedures:

\begin{enumerate}
\item \classify - gets $n$ triples of natural numbers as arguments. Each triple represents the amount of knowledge that the corresponding process has obtained in its recent \update operations. As each \update operation has a color from the set $\{0,1,2\}$, each entry is a triple in correspondence to three possible colors. This procedure constructs the sets $flag.winners$ and $flag.losers$ based on the considerations explained above.

\item {\sf newflag.} Crates a new flag before updating the $Flags$ object.
 
\item \newts - gets two pairs of timestamps: $pair_1,pair_2$ and returns a pair of timestamps, $pair_3$ such that $pair_3.new$ dominates both fields of $pair_1$, and $pair_3.old=pair_2.new$ 

\item \findmax - gets $n$ $flags$ as arguments and returns an element from $\{0,\dots,n-1\}$. This procedure is invoked during an \fscan operation and the element that this procedure returns is the id of the most up-to-date process.

\end{enumerate}

The procedures $\classify,$ \newflag and  \newts are presented in Figure \ref{procedures}. The procedure \findmax is discussed in the next subsection.

\begin{figure}[h]

\begin{tabular}[h]{|l|l|l|}
\hline
\begin{minipage}[t]{60mm}
\vspace{4mm}

$\classify(view_0,\dots,view_{n-1})$
\begin{tabbing}
***\=**\=**\=**\=**\=**\=*\=*\=*\=\kill
1.\> $winners:=$ \\ 
\> $\{(j,c) :  view_j[c]>viewsum\}\cup$\\  
\> $\{(j,c) :  view_j[c]=viewsum\andd i<j\}$\\  
2.\> $losers:=$\\ 
\> $\{(j,c) :  view_j[c]<viewsum\}\cup$\\  
\> $\{(j,c) :  view_j[c]=viewsum\andd i>j\}$\\ 

\end{tabbing}
\end{minipage}
&
\begin{minipage}[t]{50mm}
\vspace{4mm}

$\newflag()$
\begin{tabbing}
***\=**\=**\=**\=**\=**\=*\=*\=*\=\kill
1.\> $flag.color:=color$\\
2.\> $flag.vts:=vts_i$\\
3.\> $flag.winners:=winners$\\
4.\> $flag.losers:=losers$\\
5.\> $flag.peers:=peers$\\
6.\> $flag.ans:=ans$\\
7.\> return $flag$\\
\end{tabbing}

\end{minipage}

&
\begin{minipage}[t]{40mm}
\vspace{4mm}

$\newts((u,v),(u',v'))$
\begin{tabbing}
***\=**\=**\=**\=**\=**\=*\=*\=*\=\kill
1. $ts.old:=v'$\\
2. $ts.new=(next(u,v))$\\
3.\> return $(ts.old,ts.new)$
\end{tabbing}  
\end{minipage}

\\ \hline
\end{tabular}

\caption{Local procedures }
\label{procedures}
\end{figure}

\subsection {The Procedure \findmax}
\label{findmax-procedure}

This procedure is invoked during an execution of an \fscan event $S$, and it returns the id of the most up-to-date process. Thus, the process that executes $S$ returns the value $flag_i.ans$ in case that \findmax returns $i$.

The \findmax procedure of an \fscan operation $S$ gets $n$ flags as arguments: $flags(S):=(flag_0,\dots,flag_{n-1})$. The procedure returns a maximal element in relation $<_S\subseteq \{0,\dots, n-1\}\times \{0,\dots,n-1\}$ that we define here. The relation $<_S$ is defined by reference to $flags(S)$ in definition \ref{<_S}.

\begin{definition}
\label{conflict}
Let $p_i$ and $p_j$ be two processes and write: $flag_i.color=c_i$ and $flag_j.color=c_j$. We say that $p_i$ and $p_j$ are in conflict, if one of the following occurs:
\begin{enumerate}
\item $(j,c_j)\in flag_i.winners$ and $(i,c_i)\in flag_j.winners$.
\item $(j,c_j)\in flag_i.losers$ and $(i,c_i)\in flag_j.losers$.

\end{enumerate} 
\end{definition}

Definition \ref{conflict} is important since, as we shall prove, for each two processes $p_i$, $p_j$ and an \fscan event $S$, the $flag$ of one of these processes determines correctly the ordering between $p_i$ and $p_j$. That is, if $p_i$ is the reliable process and if (for example) $(j,c_j)\in flag_i.winners$ and the color in $p_j$'s $flag$ is $c_j$, than $p_j$ is indeed more up-to-date than $p_i$ (more precisely, the $ans$ field of $p_j$'s flag is more up-to-date) as indicated by $p_i$'s flag. The problem is that we do not know which process provides correct information among any pair of processes. However, this problem does not arise when the processes are not in conflict. When processes provide contradicting information we use the processes' timestamps to find the trustworthy process.   

\begin{definition}
\label{<_S}
Let $p_i$ and $p_j$ be two processes and write: $flag_i.color=c_i$, $flag_j.color=c_j$. $i<_S j$ if one of the following occurs:
\begin{enumerate}

\item $p_i$ and $p_j$ are not in conflict and $(i,c_i)\in flag_j.losers$.
\item $p_i$ and $p_j$ are not in conflict and $(j,c_j)\in flag_i.winners$
\item $p_i$ and $p_j$ in conflict, $flag_i.vts[j].new<_{ts}flag_j.vts[i].new$ and $(i,c_i)\in flag_j.losers$.
\item $p_i$ and $p_j$ in conflict, $flag_j.vts[i].new<_{ts}flag_i.vts[j].new$ and $(j,c_j)\in flag_i.winners$.
\end{enumerate}
\end{definition}

An element $i\in\{0,\dots,n-1\}$ is maximal in $<_S$ if there is no $j\neq i$ such that $i<_Sj$. The procedure $\findmax(flag_0,\dots,flag_{n-1})$ (line 2) returns a maximal element in $<_S$ (we shall prove that such a maximal element exists in any \fscan event). This procedure \findmax accesses only local variables and we omit the technical but easy implementation of this procedure.

\section{Correctness}
\label{correctness}

Fixing an execution $\tau$, we need to show that the precedence relation defined over the high level events in $\tau$, $<$ can be extended into a linear ordering, $\prec$ that belongs to the sequential specification of the $F$-snapshot object. In section \ref{findmax-procedure}, a relation $<_S$ was defined. An \fscan event $S$ returns a value stored in $Flags[j].ans$ where $j$ is maximal in relation $<_S$. Thus, for proving correctness we also need to show that for any \fscan event in $\tau$, $S$, $<_S$ admits a maximal element.  

Since the algorithm is wait-free, we may assume that all operations in $\tau$ are complete. Indeed, if there are pending operations in $\tau$, we can let the processes take additional steps and complete the pending operation. This way, an execution that extends $\tau$ is obtained. A linearization of the resulting execution admits a linearization of $\tau$ as well.

As explained in the preliminaries section, at the beginning of $\tau$, each process performs an initialization and writes initial values to the registers and variables. These initial high level events precede all other actions in $\tau$ and are considered as $\update$ events. If $I_j$ is such an initial event by $p_j$, the value of this event, $val(I_j)$ is $x_0$. Recall that the initial value of the entry $Vals[j]$ is $(0,x_0)$. 

The execution $\tau$ is a sequence of atomic actions addressed to the shared memory. Thus, the procedure executions addressed to the snapshot objects (e.g line 2 of the \update procedure) are not atomic and represent a sequence of actions that a process executes. However, by using a linearizable implementation for the snapshot objects, we may assume for convenience that all the procedure executions addressed to the snapshot objects are atomic. This assumption simplifies our proof since we do not need to speak about the linearization points of these operations and the corresponding extension of $<$. For further discussion about using linearizable implementations see \cite{AMP} and \cite{Lin}.

Our algorithm employs several snapshot objects. Thus, for preventing confusion, we use the notation $A.\update$ and $A.\scan$ to denote invocations of $\update$ and $\scan$ procedures addressed to object $A$. Note that an $A.\update(x)$ invocation by $p_i$ writes $x$ to the $i$-th segment of $A$.

If $e$ is a read (write) event executed by some of the processes, we use $val(e)$ to denote the value that the process read (wrote) in $e$. Similarly, if $e$ is an $A.\update$ event addressed to a snapshot object $A$, $val(e)$ is the value that the executing process wrote to the corresponding segment of $A$, and if $e$ is an $A.\scan$ event, $val(e)$ is the vector of elements that $e$ returns. Any low level event $e$ belongs to a unique high level event, which is an \update or an \fscan event by some of the processes. We use $[e]$ to denote this event. It is clear that $e\in[e]$.

The following notations are important in our proof:
\begin{enumerate}

\item For an $A.$\scan event $e$ on a snapshot object $A$, we define $\mu_j(e)$ to be the maximal $A.\update$ event by $p_j$ that precedes $e$. Thus, $val(e)[j]=val(\mu_j(e))$ for any $A.$\scan event $e$. 

\item Let $U$ be an \update event and $p_i$ a process. If $U$ is an initial \update event we set $\alpha_i(U)=I_i$, the initial $p_i$-\update event. Otherwise, $\alpha_i(U)$ is the $p_i$-\update event in which $p_i$ wrote to $V[i]$ the value that was read from $V[i]$ in $U$. That is:
$$\alpha_i(U)=[\mu_i(V.\scan(U))]$$
where $V.\scan(U)$ is the (unique) $V.\scan$ event in $U$, which corresponds to the execution of line 3 in the code of the \update procedure. 

\item Let $S$ be an \fscan event in which $winner=j$ (the invocation of \findmax in $S$ returns $j$). Let $e$ be the (unique) $Flags.$\scan event in $S$, and let $U_j=[\mu_j(e)]$. For a process id $i$, we define $\alpha_i(S)=\alpha_i(U_j)$.

\item Let $S$ be an \fscan event. For a process id $i$, $\beta_i(S)$ is the $p_i$-\update event that wrote to $Flags[i]$ the value read in $S$. That is, $\beta_i(S)=[\mu_i(Flags.\scan(S))]$ where $Flags.\scan(S)$ is the (unique) $Flags.\scan$ event in $S$.
\end{enumerate}

Two easy observations that will be useful later are the following:

\begin{lemma}
\label{alpha-is-a-fix-point}
For each $p_i$-\update event $U$, $\alpha_i(U)=U$.
\end{lemma}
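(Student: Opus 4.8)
The plan is to unwind the definition of $\alpha_i$ for a $p_i$-update event $U$ and show it collapses to $U$ itself. By definition, $\alpha_i(U) = [\mu_i(V.\scan(U))]$, where $V.\scan(U)$ is the scan of $V$ executed in line 3 of $U$, and $\mu_i$ picks out the latest $V.\update$ event by $p_i$ preceding that scan. So the goal reduces to showing that the latest $V.\update$ by $p_i$ that precedes the line-3 scan of $U$ is precisely the $V.\update$ executed in line 2 of $U$.

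First I would observe that within the single update operation $U$, process $p_i$ executes line 2 (a $V.\update$) strictly before line 3 (the $V.\scan$), so the line-2 $V.\update$ of $U$ is one candidate for $\mu_i(V.\scan(U))$, and it certainly precedes the line-3 scan. Next I would argue there is no later one: any $V.\update$ event by $p_i$ occurring after the line-2 write of $U$ must belong to a later $p_i$-operation (since each update operation performs exactly one $V.\update$, in line 2), and such an operation begins only after $U$ completes — in particular after the line-3 scan of $U$ — because a single process runs its operations sequentially. Hence no $p_i$-$V.\update$ event strictly between the line-2 write of $U$ and the line-3 scan of $U$ exists, so $\mu_i(V.\scan(U))$ is exactly the line-2 $V.\update$ event of $U$. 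Applying $[\cdot]$ recovers the high-level event, which is $U$, giving $\alpha_i(U)=U$. I would also dispatch the degenerate case: if $U$ is the initial update event $I_i$, then by convention $\alpha_i(U)=I_i=U$ directly.

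The main subtlety — though it is mild — is making the sequentiality argument precise: I need the fact that a single process's high-level operations are totally ordered in the execution $\tau$ and that line 2 is the unique $V.\update$ inside an update operation while line 3 is its unique $V.\scan$, both of which are immediate from the code in Figure~\ref{Algo}. There is nothing deep here; the lemma is essentially a bookkeeping check that the auxiliary function $\alpha_i$, when restricted to $p_i$'s own updates, behaves like the identity, which is exactly what one would want from a function meant to track ``which of $p_i$'s updates is reflected in a given view.'' I expect the whole proof to be two or three sentences, the only care point being to explicitly invoke per-process sequentiality rather than leave it implicit.
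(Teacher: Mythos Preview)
Your proposal is correct and follows the same approach as the paper: the paper's one-line justification is simply that the $V.\update$ in $U$ (line 2) precedes the $V.\scan$ in $U$ (line 3). Your write-up just makes explicit the per-process sequentiality that rules out any intervening $p_i$-$V.\update$, and handles the initial-event case; this is exactly the intended argument, spelled out in more detail.
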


\begin{lemma}
\label{<-implies-<alpha}
Let $U_1$ and $U_2$ be two \update events such that $U_1< U_2$. Then, for each process id $i$, $\alpha_i(U_1)\leq\alpha_i(U_2)$. 
\end{lemma}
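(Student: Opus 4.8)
\textbf{Proof proposal for Lemma~\ref{<-implies-<alpha}.}

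The plan is to reduce everything to monotonicity of the map $\mu_i$ on $V.\scan$ events and the fact that the counter written to $V[i]$ is strictly increasing along the sequence of $p_i$-\update events. Recall that $\alpha_i(U) = [\mu_i(V.\scan(U))]$ (with the convention $\alpha_i(U)=I_i$ when $U$ is the initial event; that case is immediate since $I_i$ precedes everything, so I would dispose of it first and assume both $U_1,U_2$ are non-initial). So it suffices to show $\mu_i(V.\scan(U_1)) \leq \mu_i(V.\scan(U_2))$ as $p_i$-$V.\update$ events, and then pass to the enclosing high-level events via $[\cdot]$, noting that $[\cdot]$ respects the $\leq$ order on $V.\update$ events by the same process because distinct $p_i$-\update events are totally ordered by $<$ (a single process is sequential).

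First I would establish the ordering of the relevant $V.\scan$ events. Since $U_1 < U_2$ (i.e.\ $U_1$ ends before $U_2$ begins) and $V.\scan(U_1)$ occurs inside $U_1$ while $V.\scan(U_2)$ occurs inside $U_2$, we get $V.\scan(U_1) < V.\scan(U_2)$ in the precedence order on low-level events; since we are treating the snapshot operations as atomic, these two scan events are linearly ordered and $V.\scan(U_1)$ is linearized before $V.\scan(U_2)$. Next I would invoke the defining property of $\mu_i$: $\mu_i(e)$ is the \emph{maximal} $p_i$-$V.\update$ event preceding $e$. If $e \leq e'$ are two $V.\scan$ events, then every $p_i$-$V.\update$ event preceding $e$ also precedes $e'$, hence the maximal such event for $e$ is $\leq$ the maximal such event for $e'$; that is $\mu_i(e) \leq \mu_i(e')$. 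Applying this with $e = V.\scan(U_1)$, $e' = V.\scan(U_2)$ gives $\mu_i(V.\scan(U_1)) \leq \mu_i(V.\scan(U_2))$.

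Finally I would apply $[\cdot]$ to both sides. Each of these is a $p_i$-$V.\update$ event, so it lies inside a unique $p_i$-\update high-level event; since $p_i$ executes its operations sequentially, the order of the enclosing high-level \update events matches the order of the $V.\update$ events they contain, so $\alpha_i(U_1) = [\mu_i(V.\scan(U_1))] \leq [\mu_i(V.\scan(U_2))] = \alpha_i(U_2)$, which is the claim. The only mildly delicate point—the ``main obstacle,'' such as it is—is being careful that $\mu_i$ is genuinely monotone: this rests on the atomicity assumption for the snapshot objects (so that ``precedes'' is a linear order on the scan events in question) and on the observation that ``is a $p_i$-$V.\update$ event preceding $e$'' is a downward-closed condition as $e$ grows in the precedence order; once these are spelled out the rest is bookkeeping. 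I would also remark in passing that Lemma~\ref{alpha-is-a-fix-point} is consistent with this, since for a $p_i$-\update event $U$ the last $p_i$-$V.\update$ before $V.\scan(U)$ is exactly the one $U$ itself performed in line~2.
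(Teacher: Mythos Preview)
Your proposal is correct and follows essentially the same approach as the paper: the paper's entire proof is the single observation that the $V.\scan$ event in $U_1$ precedes the $V.\scan$ event in $U_2$, and you have simply unpacked in detail why that observation suffices (monotonicity of $\mu_i$, and that $[\cdot]$ preserves order on same-process $V.\update$ events). Your handling of the initial-event corner case is also fine and consistent with the paper's definitions.
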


Lemma \ref{alpha-is-a-fix-point} holds since the $V.$\update operation in $U$ precedes the $V.$\scan operation (lines 2 and 3). Lemma \ref{<-implies-<alpha} holds since the $V.$\scan event in $U_1$ precedes the $V.$\scan event in $U_2$.

We fix a compete \fscan event $S$ and we shall prove that there is a maximal element in relation $<_S$. For each process id $i$, write $U_i=\beta_i(S)$, $flag_i=val(\mu_i(Flags.\scan(S)))$, and $c_i=flag_i.color$. That is, $U_i$ is the $p_i$-\update event that wrote to $Flags$ the value read in $S$, $flag_i$ is the value that $p_i$ wrote to $Flags[i]$ in $U_i$ and $c_i$ is the value of the color field of $flag_i$. According the initial values of the snapshot object $Flags$, and by the code of the \classify procedure, the following hold:

\begin{lemma}
For a pair $(j,c)\in \{0,\dots,n-1\} \times \{0,1,2\}$ and a process $p_i$, at most one of the following occurs:

\begin{enumerate}
\item $(j,c)\in flag_i.winners$.
\item $(j,c)\in flag_i.losers$.

\end{enumerate}
\end{lemma}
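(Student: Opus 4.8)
The plan is to trace the value $flag_i$ back to the single \update event $U_i = \beta_i(S)$ that wrote it, and then argue directly from the \classify code and the initialization. There are two cases: either $U_i$ is the initial $p_i$-\update event $I_i$, or it is an ordinary \update event.

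First I would handle the non-initial case. If $U_i$ is an ordinary \update event, then $flag_i.winners$ and $flag_i.losers$ were set by the \classify procedure in $U_i$ (via \newflag, lines 3--4), using the value $viewsum$ computed in line 9 of $U_i$ and the vector $(view_0,\dots,view_{n-1})$ read from $ViewSum$ in line 13. Looking at the code of \classify: $(j,c)\in winners$ requires either $view_j[c] > viewsum$, or ($view_j[c] = viewsum$ and $i < j$); whereas $(j,c)\in losers$ requires either $view_j[c] < viewsum$, or ($view_j[c] = viewsum$ and $i > j$). For a fixed pair $(j,c)$, exactly one of the three arithmetic comparisons $view_j[c] > viewsum$, $view_j[c] = viewsum$, $view_j[c] < viewsum$ holds (and if $view_j[c]$ happens to be $null$, the process treats it so that $(j,c)$ ends up in neither set — I would note this is the convention implicit in the code, or simply observe that a $null$ entry satisfies none of the listed conditions). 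In the equality case, the tie-break on process id's is decisive: $i<j$ puts $(j,c)$ in $winners$ only, and $i>j$ puts it in $losers$ only, and $i=j$ is impossible here since then $view_j[c]=viewsum$ would need the strict-id condition, which fails. Hence at most one of the two memberships holds. I would also remark that $winners$ and $losers$ are the \emph{only} places these sets get values inside \classify, so no stale membership survives.

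Next, the initial case. If $U_i = I_i$, then by the stated initialization of the $Flags$ object, $flag_i.winners = \{(j,0) : i < j\}$ and $flag_i.losers = \{(j,0) : j < i\}$. These two sets are plainly disjoint (one ranges over $j$ strictly above $i$, the other over $j$ strictly below $i$), and moreover every element has color $0$, so for any pair $(j,c)$ at most one membership holds — in fact none holds unless $c=0$. Combining the two cases gives the claim for every $i$.

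I do not expect a serious obstacle here; the statement is exactly what one reads off the \classify code and the initialization, and the lemma is flagged in the text as an ``easy observation''. The only point requiring a little care is the treatment of $null$ entries in $ViewSum$ and the degenerate pair $(i,c)$ (a process comparing against itself): in both situations one checks that the pair lands in neither set, which is consistent with ``at most one'' and in fact makes the proof slightly easier. So the writeup is a short case analysis: (i) reduce to the defining \update event $U_i=\beta_i(S)$; (ii) if $U_i$ is ordinary, invoke the trichotomy of the integer comparison plus the id tie-break in \classify; (iii) if $U_i$ is initial, use the disjointness of the initial $winners$/$losers$ sets.
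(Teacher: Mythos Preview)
Your proposal is correct and follows exactly the approach the paper indicates: the paper's entire justification is the sentence ``According [to] the initial values of the snapshot object $Flags$, and by the code of the \classify procedure,'' and your case analysis (initial event vs.\ ordinary \update, then trichotomy plus id tie-break) simply spells out those two ingredients. You are filling in details the paper leaves implicit, but there is no methodological difference.
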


\begin{corollary}
\label{no-2-cycles-in-<C}
If $i<_S j$, then $\neg(j\ <_S i)$.
\end{corollary}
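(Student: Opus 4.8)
The plan is to argue by contradiction: assume both $i<_S j$ and $j<_S i$ hold, and split into cases according to whether $p_i$ and $p_j$ are in conflict. Note first that the conflict relation of Definition~\ref{conflict} is symmetric in its two arguments, so ``$p_i$ and $p_j$ are in conflict'' is the same statement as ``$p_j$ and $p_i$ are in conflict''; hence exactly one of the two cases below applies to the unordered pair $\{i,j\}$.

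\emph{Case 1: $p_i$ and $p_j$ are not in conflict.} Then $i<_S j$ must come from clause 1 or clause 2 of Definition~\ref{<_S}, so $(i,c_i)\in flag_j.losers$ or $(j,c_j)\in flag_i.winners$; symmetrically $j<_S i$ gives $(j,c_j)\in flag_i.losers$ or $(i,c_i)\in flag_j.winners$. I would then go through the four resulting combinations. If $(i,c_i)\in flag_j.losers$ and $(j,c_j)\in flag_i.losers$, then $p_i,p_j$ are in conflict by clause 2 of Definition~\ref{conflict}, a contradiction. If $(j,c_j)\in flag_i.winners$ and $(i,c_i)\in flag_j.winners$, they are in conflict by clause 1 of Definition~\ref{conflict}, again a contradiction. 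The remaining two combinations place some pair $(j,c_j)$ in both $flag_i.winners$ and $flag_i.losers$, or $(i,c_i)$ in both $flag_j.winners$ and $flag_j.losers$, contradicting the lemma immediately preceding this corollary.

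\emph{Case 2: $p_i$ and $p_j$ are in conflict.} Then $i<_S j$ comes from clause 3 or clause 4 of Definition~\ref{<_S}, and $j<_S i$ from clause 3 or clause 4 with the roles of $i$ and $j$ interchanged. Writing $a=flag_i.vts[j].new$ and $b=flag_j.vts[i].new$, clause 3 for $i<_S j$ asserts $a<_{ts}b$ together with $(i,c_i)\in flag_j.losers$, while clause 4 for $i<_S j$ asserts $b<_{ts}a$ together with $(j,c_j)\in flag_i.winners$; the analogous clauses for $j<_S i$ assert $b<_{ts}a$ with $(j,c_j)\in flag_i.losers$, respectively $a<_{ts}b$ with $(i,c_i)\in flag_j.winners$. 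Checking the four combinations: the $(3,3)$ and $(4,4)$ combinations yield $a<_{ts}b<_{ts}a$, impossible since $G$ has no cycle of length two; the $(3,4)$ combination places $(i,c_i)$ in both $flag_j.winners$ and $flag_j.losers$, and the $(4,3)$ combination places $(j,c_j)$ in both $flag_i.winners$ and $flag_i.losers$, each contradicting the preceding lemma. This exhausts all cases and proves the corollary.

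\textbf{Main obstacle.} There is no substantive obstacle here: the statement is pure bookkeeping over Definitions~\ref{conflict} and~\ref{<_S}. The only two external facts used are that the timestamp graph $G$ contains no $2$-cycle, already observed in the text, and the immediately preceding lemma that no pair $(j,c)$ can lie simultaneously in a flag's $winners$ and $losers$ sets. The one point requiring care is to record explicitly that the conflict relation is symmetric, so that the conflict/no-conflict dichotomy is well posed and the case split above is exhaustive.
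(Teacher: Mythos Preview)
Your proof is correct and follows essentially the same idea as the paper's, only far more explicitly. The paper compresses the whole argument into one sentence---that in each case the comparison $i<_S j$ versus $j<_S i$ is decided by a single process's flag, so the preceding lemma suffices---whereas you spell out the full $4+4$ case analysis, including the use of the no-$2$-cycle property of $G$ and the observation that in the non-conflict case two of the four combinations actually \emph{force} a conflict via Definition~\ref{conflict}. Your version makes explicit exactly the bookkeeping that the paper's one-liner leaves to the reader.
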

\begin{proof} 
Consider definition \ref{<_S}, and observe that relation $<_S$ between $i$ and $j$ is determined only by the $flag$ of one of these processes. Hence, this is a consequence from the previous lemma. 
\end{proof}

If $U_i$ is not the initial \update event $I_i$, when $p_i$ executed $U_i$, it computed a natural number while executing line 9. Let $m_i$ denote this number. If $U_i=I_i$, define $m_i=0$. We argue that $m_i$ reflects correctly how recent $p_i$'s view is.

\begin{lemma}
\label{counters-means-all}
For two processes $p_i$ and $p_j$, if $(m_i,i)<(m_j,j)$ at the lexicographic order, then for each process id $k$, $\alpha_k(U_i)\leq\alpha_k(U_j)$.
\end{lemma}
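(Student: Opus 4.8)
The plan is to read $m_i$ as a quantity that records, process by process, how far along $p_k$'s sequence of \update events the event $\alpha_k(U_i)$ lies, and then to use the atomicity of the snapshot object $V$ to conclude that the views gathered by $U_i$ and $U_j$ at line~3 are pointwise comparable, with the hypothesis $(m_i,i)<(m_j,j)$ fixing the direction of that comparison.

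First I would set up two elementary facts. For a $p_k$-\update event $U$ let $q(U)$ denote the value of $counter$ when line~2 of $U$ is executed; then $q(I_k)=0$ and along $p_k$'s \update events $I_k,U^1_k,U^2_k,\dots$ the quantity $q$ takes the successive values $0,1,2,\dots$, so for $p_k$-\update events $U,U'$ we have $U\le U'$ if and only if $q(U)\le q(U')$, with equality forcing $U=U'$. Secondly, by the definition of $\alpha_k$, for a non-initial $p_i$-\update event $U$ the vector returned by $V.\scan(U)$ is $\bigl(val(\alpha_0(U)),\dots,val(\alpha_{n-1}(U))\bigr)$ and its $k$-th $counter$-field equals $q(\alpha_k(U))$, so the number computed at line~9 satisfies $m_i=\sum_{k<n}q(\alpha_k(U_i))$; for $U_i=I_i$ this reads $0=\sum_k q(I_k)$, consistent with $m_i:=0$ and $\alpha_k(I_i)=I_k$.

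Next I would dispose of the initial-event cases. If $U_i=I_i$, then for every $k$ the event $\alpha_k(U_i)=I_k$ is the least $p_k$-\update event, so $\alpha_k(U_i)\le\alpha_k(U_j)$ and we are done. If $U_i\neq I_i$ but $U_j=I_j$, then by Lemma~\ref{alpha-is-a-fix-point} $q(\alpha_i(U_i))=q(U_i)\ge 1$, so $m_i\ge 1>0=m_j$, contradicting $(m_i,i)<(m_j,j)$; hence this case cannot occur. In the remaining case both $U_i$ and $U_j$ possess genuine $V.\scan$ events, say $e_i$ and $e_j$, and these are comparable because the snapshot operations are assumed atomic. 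Suppose $e_i$ precedes $e_j$: for each $k$ the low-level event $\mu_k(e_i)$ is a $p_k$-$V.\update$ event that precedes $e_i$, hence precedes $e_j$, so it is not later than $\mu_k(e_j)$ and therefore $\alpha_k(U_i)=[\mu_k(e_i)]\le[\mu_k(e_j)]=\alpha_k(U_j)$ — which already gives the conclusion, without using the hypothesis on the $m$'s. If instead $e_j$ precedes $e_i$, the same reasoning yields $\alpha_k(U_j)\le\alpha_k(U_i)$ for every $k$, so $q(\alpha_k(U_j))\le q(\alpha_k(U_i))$ for every $k$ and, summing, $m_j\le m_i$; together with $m_i\le m_j$ (from $(m_i,i)<(m_j,j)$) this forces $m_i=m_j$, so the termwise-comparable sums $\sum_k q(\alpha_k(U_i))=\sum_k q(\alpha_k(U_j))$ must agree term by term, whence $\alpha_k(U_i)=\alpha_k(U_j)$ for all $k$, in particular $\alpha_k(U_i)\le\alpha_k(U_j)$.

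The only place where something could fail is the monotonicity of snapshot views — the step that between $e_i$ and $e_j$ the earlier scan has the pointwise-smaller view — but under the standing assumption that the snapshot operations are atomic this is immediate from the definition of $\mu_k(\cdot)$ as the last preceding $p_k$-$V.\update$ event, so I do not anticipate a real obstacle; the only care needed is the treatment of initial \update events and the translation between ``sum of counters'' and ``pointwise order of views'', both carried out above. I note in passing that the process-id tie-break in the hypothesis is never actually used here: when $m_i=m_j$ one obtains $\alpha_k(U_i)=\alpha_k(U_j)$ outright.
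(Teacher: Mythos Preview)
Your proof is correct and follows essentially the same approach as the paper's: both handle the initial-event cases first, then use atomicity of the $V.\scan$ events to compare the two views pointwise and derive the relation between $m_i$ and $m_j$. The only cosmetic difference is that the paper argues by contradiction (assuming some $\alpha_k(U_j)<\alpha_k(U_i)$ and deducing $m_j<m_i$), whereas you do a direct case split on which of $e_i,e_j$ comes first and squeeze out equality in the ``wrong'' order; your closing remark that the id tie-break is never needed is also accurate.
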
 

\begin{proof}
If $U_i=I_i$, then for each process id $k$, $\alpha_k(U_i)$ is the first $p_k$-\update event and the lemma hold. If $U_j=I_j,$ then $m_j=0$ which implies that $m_i=0$. Thus, $U_i=I_i$ and we are done. It is left to deal with the case that $U_i\neq I_i$ and $U_j\neq I_j$.

Towards a contradiction, assume that $\alpha_k(U_j)< \alpha_k(U_i)$ for some process id $k$. We conclude that the $V.\scan$ operation in $U_i$ occurred after the $V.\scan$ operation in $U_j$. Therefore, the counter that $p_i$ read from each field $V[t].counter$ is larger than the counter that $p_j$ read
 (note that the $l$-th \update operation by each process writes $l$ to this field). Hence, $m_j\leq m_i$. However, since the integer that $p_i$ read from $V[k].counter$ is strictly larger than the one that $p_j$ read, $m_j<m_i$ in contradiction to the assumption that $(m_i,i)<(m_j,j)$.   
\end{proof}

During the execution of $S$, for each two processes $p_i$ and $p_j$, the process that executes $S$ decides whether $i<_S j$ or $j<_S i$. The decision is made upon the values of these processes' $flags$. The next lemmas show that for each such a pair of processes, at least one of these processes' $flags$ provides reliable information. That is to say, for some process (say, $p_i$) the following occurs:
\begin{itemize}
\item If $flag_i.ans$ is more up-to-date than $flag_j.ans$, then $(j,c_j)\in flag_i.losers$.
\item If $flag_i.ans$ is less up-to-date than $flag_j.ans$, then $(j,c_j)\in flag_i.winners$.
\end{itemize}  

\begin{lemma}
\label{later-reads-correctly}
Let $p_i$ and $p_j$ be two processes such that $U_i\neq I_i$. Let $e_i\in U_i$ be the \update of $ViewSum$ in $U_i$ (line 12) and let $e_j$ be the \update of $ViewSum$ in $U_j$. Let $e$ be the \scan event of $ViewSum$ in $U_i$ (line 13). If $e_j<e_i$, then one of the following holds:
\begin{enumerate}
\item $\mu_j(e)=e_j$ or,

\item $\mu_j(e)=e'>e_j$ and there is no $p_j$-\update event between $U_j=[e_j]$ and $[e']$.
\end{enumerate}
\end{lemma}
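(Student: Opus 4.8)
The plan is to exploit that $U_i=\beta_i(S)$ and $U_j=\beta_j(S)$ are anchored to the \emph{same} $Flags.\scan$ event of $S$, so that their line 16 updates of $Flags$ are exactly the $\mu_i$ and $\mu_j$ of that scan. Write $f:=Flags.\scan(S)$ for the unique $Flags.\scan$ event inside $S$, and let $g_i,g_j$ be the $Flags.\update$ events (line 16) of $U_i$ and $U_j$. By definition of $\beta$, $g_i=\mu_i(f)$ and $g_j=\mu_j(f)$; in particular $g_j$ is the \emph{latest} $p_j$-$Flags.\update$ event preceding $f$. (These events exist because the initial events $I_i,I_j$ write to $Flags$, and the hypothesis $U_i\neq I_i$ guarantees that $e_i$ and $e$ are honest line 12 and line 13 events of $U_i$.)

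First I would record the chronology inside $U_i$: lines 12, 13 and 16 run in that order, so $e_i<e<g_i$, and since $g_i=\mu_i(f)<f$ we get $e<f$. The hypothesis $e_j<e_i$ now gives $e_j<e_i<e$, so $e_j$ is a $p_j$-$ViewSum.\update$ event preceding $e$; hence $\mu_j(e)$ exists and $e_j\le\mu_j(e)$. This is already the dichotomy of the lemma: either $\mu_j(e)=e_j$, which is alternative (1), or $\mu_j(e)=e'$ with $e_j<e'$.

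In the second case I would argue by contradiction. As $e_j$ and $e'$ are $ViewSum.\update$ events of $p_j$ with $e_j<e'$, and each \update operation contains exactly one $ViewSum.\update$ event, the high-level events $U_j=[e_j]$ and $[e']$ are distinct, with $U_j<[e']$ in $p_j$'s (sequential) process order. Suppose some $p_j$-\update event $U'$ satisfies $U_j<U'<[e']$, and let $g'$ be its line 16 $Flags.\update$ event. Sequentiality of $p_j$ gives $g_j<g'$ (from $U_j<U'$) and $g'<e'$ (from $U'<[e']$: every action of $U'$ precedes every action of $[e']$, and $e'$ is the line 12 action of $[e']$). But $e'=\mu_j(e)<e<f$, hence $g'<f$; so $g'$ is a $p_j$-$Flags.\update$ event preceding $f$ and strictly later than $g_j$, contradicting $g_j=\mu_j(f)$. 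Therefore no such $U'$ exists, which is precisely alternative (2).

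The step I expect to be the only real subtlety is the implication $U'<[e']\Rightarrow g'<e'$: it rests on reading ``there is no $p_j$-\update event between $U_j$ and $[e']$'' as a statement about $p_j$'s totally ordered sequence of \update operations, so that an intervening $U'$ would have to be \emph{completed} — through its line 16 — before $[e']$ begins; the rest is bookkeeping with $<$ and the definitions of $\mu$ and $\beta$. I would also point out that the argument never uses $U_j\neq I_j$, so the degenerate case $U_j=I_j$ (where $e_j$ and $g_j$ are the initial writes to $ViewSum[j]$ and $Flags[j]$) needs no separate treatment.
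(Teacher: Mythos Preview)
Your proof is correct and follows essentially the same approach as the paper's: both derive a contradiction by producing a $p_j$-$Flags.\update$ event strictly later than $g_j=\mu_j(Flags.\scan(S))$ yet still preceding $Flags.\scan(S)$, thereby violating $U_j=\beta_j(S)$. The paper phrases the claim as ``at most one $ViewSum.\update$ event by $p_j$ between $e_j$ and $e$'' and assumes two such events $e'<e''$ for the contradiction, whereas you work directly with an intermediate high-level update $U'$ between $U_j$ and $[e']$; since each \update contains exactly one $ViewSum.\update$, these formulations are equivalent and the contradiction step is identical.
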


\begin{proof}
Since $e_j<e_i$ and since (by the code) $e_i<e$, we see that $e_j<e$ and hence, $e_j\leq \mu_j(e)$. Thus, we need to show that there is at most one $ViewSum.\update$ event by $p_j$ between $e_j$ and $e$. 

Assume for a contradiction that $e'$ and $e''$ are two $ViewSum.\update$ events by $p_j$ such that 
$$e_j<e'<e''<e.$$ 
Each $ViewSum.\update$ event belongs to a unique  \update event so there are two different $p_j$-\update operations $U'=[e']$ and $U''=[e'']$. Recall that $[e_j]=U_j$ and observe that:
$$\beta_j(S)=U_j<U'<e''<e.$$

Now, the $Flags.\scan$ event in $S$ occurs after $e$, so it reads the value written to $Flags[j]$ in $U'$ or in a later event. We have:
$$\beta_j(S)=U_j<U'\leq [\mu_j(Flags.\update(S))]$$
in contradiction to the definition of $\beta_j(S)$.   
\end{proof}

We conclude:

\begin{lemma}
\label{write-after-knows}
Let $p_i$ and $p_j$ be two processes such that $U_i\neq I_i$. Let $e_i\in U_i$ be the \update of $ViewSum$ in $U_i$, let $e_j$ be the \update of $ViewSum$ in $U_j$, and let $e$ be the \scan event of $Views$ in $U_i$. If $e_j<e_i$, then $p_i$ reads $m_j$ from $ViewSum[j][c_j]$ in $e$, and in addition:
\begin{enumerate}
\item If $(m_i,i)<(m_j,j)$ (at the lexicographic order), then $(j,c_j)\in flag_i.winners$.
\item If $(m_i,i)>(m_j,j)$ (at the lexicographic order), then $(j,c_j)\in flag_i.losers$.

\end{enumerate}
\end{lemma}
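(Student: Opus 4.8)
The plan is to derive this lemma directly from Lemma~\ref{later-reads-correctly}. First I would prove the main claim, that $p_i$ reads $m_j$ from $ViewSum[j][c_j]$ in $e$, i.e.\ that $val(\mu_j(e))[c_j]=m_j$. Note first that since $e_j$ is defined (it is the $ViewSum.\update$ of $U_j$, executed at line 12), $U_j\neq I_j$; hence $U_j$ has a well-defined color $c_j$ set at line 1, and $m_j$ is the natural number it computes at line 9. Lines 10--12 of $U_j$ then write $m_j$ into $myview[c_j]$ and update $ViewSum[j]$ without overwriting entry $c_j$ (line 11 touches entry $c_j+1\pmod 3\neq c_j$), so $val(e_j)[c_j]=m_j$. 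Now apply Lemma~\ref{later-reads-correctly}, which is available because $e_j<e_i$. In its first case $\mu_j(e)=e_j$ and the claim follows. In its second case $\mu_j(e)=e'$, the $ViewSum.\update$ event of the $p_j$-operation $[e']$ that immediately follows $U_j$ in program order; since the $color$ field is incremented modulo $3$ at line 1 of every operation, $[e']$ has color $c_j+1\pmod 3$, so at lines 10 and 11 of $[e']$ only the entries $c_j+1\pmod 3$ and $c_j+2\pmod 3$ of $myview$ are modified. As no $p_j$-operation lies between $U_j$ and $[e']$, entry $c_j$ of $myview$ still holds $m_j$ when $e'$ executes, so $val(e')[c_j]=m_j$ again, proving the claim.

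Next I would unwind the code of $U_i$. Since $U_i\neq I_i$, the local variable $viewsum$ holds $m_i$ from line 9 through line 14 (lines 10--13 do not touch it), and the vector passed to $\classify$ at line 14 satisfies $view_j[c_j]=val(e)[j][c_j]=m_j$ by the claim above. Reading the definition of $\classify$ for the single pair $(j,c_j)$: if $(m_i,i)<(m_j,j)$ lexicographically, then either $m_j>m_i=viewsum$, or $m_j=viewsum$ and $i<j$, so $(j,c_j)$ enters $winners$; symmetrically, if $(m_i,i)>(m_j,j)$, then $(j,c_j)$ enters $losers$. Since line 15 ($\newflag$) copies $winners$ and $losers$ verbatim into $flag.winners$ and $flag.losers$, and $flag_i$ is precisely the $flag$ that $U_i$ writes to $Flags[i]$ at line 16, both conclusions follow. (Observe that $e_j<e_i$ already forces $i\neq j$, since otherwise $U_i=U_j$ and $e_i=e_j$; hence exactly one of the two cases occurs.)

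I expect the only delicate point to be the second case of the first paragraph: one must track the modulo-$3$ indices carefully to confirm that the operation immediately following $U_j$ leaves $myview[c_j]$ untouched. Everything else is a routine unwinding of the code together with Lemma~\ref{later-reads-correctly}.
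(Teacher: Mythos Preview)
Your proposal is correct and follows essentially the same line as the paper's proof: invoke Lemma~\ref{later-reads-correctly} to conclude that at most one additional $p_j$-write to $ViewSum$ occurs between $e_j$ and $e$, observe that this single intervening write (if any) has color $c_j+1\pmod 3$ and therefore leaves $myview[c_j]=m_j$ intact, and then read off the two conclusions directly from the code of \classify\ with $view_j[c_j]=m_j$ and $viewsum=m_i$. Your write-up spells out the color arithmetic that the paper compresses into the phrase ``has not been `deleted'\,''; the paper additionally cites Lemma~\ref{counters-means-all}, but that reference is not actually needed for the statement as written, so your omission of it is harmless.
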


\begin{proof}
$e\in U_i$ is the $ViewSum.\scan$ event in $U_i$ by $p_i$. By the previous lemma, since $e_j<e_i$ there is at most one write to $ViewSum$ between $e_j$ and $e$. Thus, the value that $p_j$ wrote to $ViewSum[j][c_j]$ (which is $m_j$) has not been ``deleted" (consider lines 10-12 in the \update procedure). $p_i$ reads $m_j$ from $ViewSum[j][c_j]$ and the lemma follows from the code of the \classify procedure and from lemma \ref{counters-means-all}.
\end{proof}

So far, we have proved that for any two processes $p_i$ and $p_j$, one of these processes' $flags$ provides reliable information. Namely, the process that wrote later to $ViewSum$ during the \update events $U_i$ and $U_j$. The next lemma easily stems.
  
\begin{lemma}
\label{not-in-conflict}
Let $p_i$ and $p_j$ be two processes which are not in conflict (the definition is in section \ref{findmax-procedure}). If  $(m_i,i)<(m_j,j)$ lexicographically, then $i<_S j$. 
\end{lemma}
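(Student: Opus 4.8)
The plan is to reduce the claim, via Definition~\ref{<_S}, to a single membership fact and then to obtain that fact from Lemma~\ref{write-after-knows} by a case split on which of the two writes to $ViewSum$ (the one in $U_i$ and the one in $U_j$) happens first. Concretely: since $p_i$ and $p_j$ are not in conflict, clauses~1 and~2 of Definition~\ref{<_S} show that in order to conclude $i<_S j$ it suffices to establish either (a)~$(j,c_j)\in flag_i.winners$, or (b)~$(i,c_i)\in flag_j.losers$. All the real synchronisation content is already packaged in Lemma~\ref{write-after-knows}, so the body of this proof is essentially just the case analysis.

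First I would clear away the degenerate case $U_i=I_i$. In any non-initial \update the scan of $V$ in line~3 follows the write to $V[i]$ in line~2, whose counter field is $\geq 1$; hence $U_i\neq I_i$ forces $m_i\geq 1$, and more generally $m=0$ holds exactly for the initial \update events. Thus if $U_i=I_i$ then $m_i=0$, so $(m_i,i)<(m_j,j)$ gives either $m_j\geq 1$, or else $m_j=0$ together with $i<j$; in both situations (a) or (b) follows by inspection of the initial contents of $Flags$ and $ViewSum$ (here $c_i=0$, $flag_i.winners=\{(k,0):i<k\}$, $flag_j.losers=\{(k,0):k<j\}$, and an as-yet-unwritten segment $ViewSum[i]$ still holds $(0,null,null)$). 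So assume henceforth $U_i\neq I_i$; then $m_j\geq m_i\geq 1$, hence also $U_j\neq I_j$, and both $ViewSum.\update$ events, $e_i\in U_i$ and $e_j\in U_j$ (line~12), exist.

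For the main case, $e_i$ and $e_j$ are comparable, since we treat the snapshot operations as atomic, so either $e_j<e_i$ or $e_i<e_j$. If $e_j<e_i$, Lemma~\ref{write-after-knows} applied to the pair $(p_i,p_j)$ turns the hypothesis $(m_i,i)<(m_j,j)$ into $(j,c_j)\in flag_i.winners$, which is (a). If $e_i<e_j$, I would apply the same lemma with the roles of $i$ and $j$ interchanged — its hypotheses $U_j\neq I_j$ and $e_i<e_j$ now hold — so from $(m_j,j)>(m_i,i)$ its second clause yields $(i,c_i)\in flag_j.losers$, which is (b). In either case Definition~\ref{<_S} gives $i<_S j$, completing the argument.

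The step I expect to require the most care is the bookkeeping for the initial \update events: verifying the ``$m=0$ iff initial'' dichotomy cleanly, and confirming that when $U_i=I_i$ the relevant $Flags$ and $ViewSum$ segments genuinely still carry their initialisation values (in particular ruling out that $p_j$ has read real $p_i$-data from $ViewSum$ while $\beta_i(S)$ is still $I_i$), together with double-checking that Lemma~\ref{write-after-knows} is legitimately invoked in its $i\leftrightarrow j$ orientation. Everything else is a routine unwinding of Definitions~\ref{conflict} and~\ref{<_S}.
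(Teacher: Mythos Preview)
Your proposal is correct and follows essentially the same approach as the paper: reduce to clauses~1--2 of Definition~\ref{<_S} using the ``not in conflict'' hypothesis, then obtain the needed membership $(j,c_j)\in flag_i.winners$ or $(i,c_i)\in flag_j.losers$ from Lemma~\ref{write-after-knows} according to which of $e_i,e_j$ occurs later. The paper organises the degenerate case slightly differently (it branches on $m_j=0$ rather than on $U_i=I_i$, and hides your explicit $e_i<e_j$ / $e_j<e_i$ split behind a ``w.l.o.g.''), but the substance is identical; in fact your explicit two-direction invocation of Lemma~\ref{write-after-knows} is exactly what that ``w.l.o.g.'' is abbreviating, and it already covers the $U_i=I_i$, $U_j\neq I_j$ sub-case via the $e_i<e_j$ branch, so you need not verify the $ViewSum[i]$ contents by hand.
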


\begin{proof}
First assume that $m_j=0$. In this case, $U_j=I_j$ the initial \update event. In addition, since $(m_i,i)<(m_j,j)$, also $m_i=0$ thus $U_i=I_i$ as well. We conclude that $i<j$ and according to the initial values of the registers we get that $(j,0)\in Flags[i].winners$ and $(i,0)\in Flags[j].losers$. In addition, $c_i=c_j=0$ and hence, $i<_Sj$ as required.

Now assume that $m_j>0$ and conclude that $U_j\neq I_j$. Write $e_i\in U_i$ - the \update of $ViewSum$ in $U_i$ and respectively, $e_j$ is the \update of $ViewSum$ in $U_j$. Assume w.l.o.g. that $e_j<e_i$ and observe that $U_i$ is not the initial event either. By lemma \ref{write-after-knows}, $(j,c_j)\in flag_i.winners$. Since the processes are not in conflict the claim holds.
\end{proof}

Our next goal is to prove the same for the case that the processes are in conflict. If the processes are in conflict, we know by the previous lemmas that one of them provides reliable information. Recall that in this case, the definition of $<_S$ is according to the $flag$ of the process that presents a later timestamp. We need to show that the process with the later timestamp is the reliable one, namely the one that wrote later to $ViewSum$.

\begin{lemma}
Let $p_i$ and $p_j$ be two processes. Let $e_i\in U_i$ be the \update of $ViewSum$ in $U_i$, let $e_j$ be the \update of $ViewSum$ in $U_j$, and assume that $e_j<e_i$. If $p_i$ and $p_j$ are in conflict, then $flag_j.vts[i].new<_{ts} flag_i.vts[j].new$. 
\end{lemma}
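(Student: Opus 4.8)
The plan is to follow a single entry of $VTS$ through the two updates: the pair $VTS[j][i]$ that $p_i$ reads in the $VTS.\scan$ of $U_i$ (line~5). Write $U_j^{\star}=[\mu_j(VTS.\scan(U_i))]$ for the $p_j$-\update that produced that value. I would show that $U_j^{\star}$ is either $U_j$ itself or the $p_j$-\update immediately following $U_j$; once this is in hand the lemma follows at once from the code of \newts and the defining property of $next$. Indeed, if $U_j^{\star}=U_j$ the \emph{new} field of the pair $p_i$ reads is exactly $flag_j.vts[i].new$; and if $U_j^{\star}$ is the successor of $U_j$ then, since that successor reads $VTS[j]$ exactly as $U_j$ left it and line~1 of \newts copies the previous \emph{new} timestamp into the \emph{old} field, the \emph{old} field of the pair $p_i$ reads is exactly $flag_j.vts[i].new$. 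In both cases $flag_i.vts[j].new=next(\cdot,\cdot)$ is formed from those two fields, hence dominates $flag_j.vts[i].new$, as required.

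Showing that $U_j^{\star}$ is not a $p_j$-\update two or more steps after $U_j$ is routine and needs only $\beta$-maximality: the $VTS.\scan$ of $U_i$ precedes the $Flags.\update$ of $U_i$, which precedes $Flags.\scan(S)$ because $U_i=\beta_i(S)$; so if some $p_j$-\update at least two steps after $U_j$ had already written $VTS[j]$ by the time of that $VTS.\scan$ (hence before $Flags.\scan(S)$), then the immediate successor of $U_j$ would already have completed its $Flags.\update$ before $Flags.\scan(S)$, contradicting $U_j=\beta_j(S)$. This is the same reasoning as in Lemma~\ref{later-reads-correctly}, transferred from $ViewSum$ to $VTS$.

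The substantive half is the reverse bound: that the $VTS.\update$ of $U_j$ (line~8) precedes the $VTS.\scan$ of $U_i$ (line~5), so that $U_j^{\star}$ is not \emph{older} than $U_j$. Here the conflict hypothesis is essential. Since $e_j<e_i$, Lemma~\ref{write-after-knows} applies: $p_i$ reads $m_j$ from $ViewSum[j][c_j]$ in $U_i$, so membership of $(j,c_j)$ in $flag_i.winners$ versus $flag_i.losers$ is dictated by the lexicographic order of $(m_i,i)$ and $(m_j,j)$; the conflict thus forces $(m_i,i)<(m_j,j)$ in the first case of Definition~\ref{conflict} and $(m_i,i)>(m_j,j)$ in the second. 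The conflict also asserts a corresponding membership of $(i,c_i)$ in $flag_j.winners$ or $flag_j.losers$; comparing the value $p_j$ read from $ViewSum[i][c_i]$ with $m_i$ via the counter--monotonicity argument of Lemma~\ref{counters-means-all} (a value placed there by a $p_i$-\update earlier than $U_i$, or the initial or erased content, is below $m_i$, or is $null$) excludes a stale read and so fixes the relative order of the $ViewSum.\update$ of $U_i$ and the $ViewSum.\scan$ of $U_j$. Feeding this back together with $e_j<e_i$, with Lemmas~\ref{counters-means-all} and~\ref{alpha-is-a-fix-point} — which in the first case yield $\alpha_i(U_j)=U_i$, so that $p_j$ reads $V[i]$ in $U_j$ precisely from $U_i$ — and with the elementary precedences among lines~2,~3,~5,~8,~12,~13 of $U_i$ and $U_j$, one concludes that line~8 of $U_j$ precedes line~5 of $U_i$: in any interleaving where it does not, the $V$- and $VTS$-reads of the two updates would either contradict the lexicographic order just derived or exhibit $flag_i.vts[j].new<_{ts}flag_j.vts[i].new$, which is impossible since $G$ has no cycles of length two.

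I expect the main obstacle to be exactly this last translation, from value-level data to timing data. The $VTS$ steps (lines~5--8) sit \emph{before} the $ViewSum$ steps (lines~12--13) of each \update, so the ordering $e_j<e_i$ of the $ViewSum$ writes does not by itself pin down the two $VTS$ operations relative to one another; the delicate point is to exclude the interleavings in which $p_i$ obtains a stale $VTS[j]$ while $p_j$ obtains a fresh $VTS[i]$ — which would reverse the domination we want — by carefully chaining the precedence facts delivered by Lemmas~\ref{counters-means-all},~\ref{alpha-is-a-fix-point} and~\ref{write-after-knows} with $\beta$-maximality. (One should also note that, as in Lemmas~\ref{later-reads-correctly} and~\ref{write-after-knows}, the phrasing ``$e_i\in U_i$, the \update of $ViewSum$ in $U_i$'' already presupposes $U_i\neq I_i$, so the degenerate initial-update cases need not be treated separately.)
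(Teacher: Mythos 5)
Your skeleton — track the pair $VTS[j][i]$ read in line 5 of $U_i$, show its writer is $U_j$ or its immediate successor (the latter by the $\beta_j(S)=U_j$ argument), and finish with the $\newts$/$next$ computation — matches the shape of the paper's proof. But the step you yourself call the substantive half, that the $VTS.\update$ of $U_j$ (line 8) precedes the $VTS.\scan$ of $U_i$ (line 5), is not actually proved, and the sketch you give for it points in the wrong direction. The paper's argument is: since the processes are in conflict and, by Lemma \ref{write-after-knows}, $p_i$ read $m_j$, the value $p_j$ read from $ViewSum[i][c_i]$ in its scan $s_j$ is neither $m_i$ nor $null$. If $e_i<s_j$, then by the $\beta_i(S)=U_i$ argument of Lemma \ref{later-reads-correctly} at most one $p_i$-$ViewSum.\update$ intervenes, and that one (the successor of $U_i$) still carries $m_i$ in position $c_i$, so $p_j$ would have read $m_i$; hence $s_j<e_i$. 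If no $p_i$-$ViewSum.\update$ lay between $s_j$ and $e_i$, then $p_j$ would have read the value written by the $p_i$-\update preceding $U_i$, whose color is $c_i-1\pmod 3$ and which therefore stores $null$ in position $c_i$ (line 11); hence some $e_i'$ with $s_j<e_i'<e_i$ exists, and since $e_i'$ belongs to an earlier $p_i$-\update, $s_j$ precedes the start of $U_i$, giving $t_j<s_j<s_i$. Your sketch instead asserts that comparing with $m_i$ ``excludes a stale read'' and yields $\alpha_i(U_j)=U_i$. This is backwards: in the conflict case that can actually arise here (both processes in each other's $losers$ — given $e_j<e_i$, $flag_i$ is correct, and the both-$winners$ case would force $p_j$ to read a value $>m_i$, contradicting $\beta_i(S)=U_i$), the value $p_j$ read is necessarily a stale one, strictly below $m_i$ and non-$null$, and it is exactly this staleness that pushes $s_j$ before the start of $U_i$; in particular $\alpha_i(U_j)=U_i$ is false, since the $V.\scan$ of $U_j$ precedes $s_j$, which precedes the start of $U_i$.

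Moreover, your closing argument for excluding the remaining interleavings is circular: you dismiss them because they would ``exhibit $flag_i.vts[j].new<_{ts}flag_j.vts[i].new$, which is impossible since $G$ has no cycles of length two.'' The absence of $2$-cycles only forbids mutual domination; a single domination in the unwanted direction is not by itself contradictory — ruling it out is precisely the content of the lemma being proved. So the crux of the lemma, the conflict-driven timing claim $t_j<s_i$, remains unestablished in your proposal; the missing ingredients are the two contradiction arguments above (``$p_j$ did not read $m_i$'' to get $s_j<e_i$, and ``$p_j$ did not read $null$'', via the color-erasure of line 11, to push $s_j$ before the start of $U_i$).
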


\begin{proof}
First, note that $U_i\neq I_i$. Indeed, if $U_i=I_i$ we get that also $U_j=I_j$ (since $e_j<e_i$) which implies that the processes are not in conflict.

For the rest of the proof we assume that $U_j\neq I_j$. If $U_j=I_j$, then similar (and simpler) argument can be applied. Let $s_j$ be the \scan of $ViewSum$ in $U_j$. By lemma \ref{write-after-knows}, $p_i$ reads $m_j$ from $ViewSum[j][c_j]$ in $U_i$, but since $p_i$ and $p_j$ are in conflict, conclude that $p_j$ read some $k\neq m_i$ from $ViewSum[i][c_i]$ in $s_j$. Hence, 
\begin{equation}
\label{e-doesnt-read-e_i}
\mu_i(s_j)\neq e_i.
\end{equation}

Since $e_j<e_i$ and (by the code) $e_j<s_j$, either $e_j<s_j<e_i$ or $e_j<e_i<s_j$. We claim that the first option occurs and $s_j<e_i$. Assume otherwise and use equation \ref{e-doesnt-read-e_i} to conclude that $e_i<\mu_i(s_j)$. Note that there can be at most one $ViewSum.\update$ event by $p_i$ that follows $e_i$ and precedes $s_j$ (consider the arguments in the proof of lemma \ref{later-reads-correctly}), and hence $s_j$ reads from $ViewSum[i]$ the value of this event. However, by the code of the \update procedure, the \update operation by $p_i$ that follows $U_i$ also writes $m_i$ to $ViewSum[i][c_i]$. Thus, if $e_i<s_j$, then $p_j$ reads $m_i$ from $ViewSum[i][c_i]$ in $s_j$, and this is in contradiction to the assumption that the processes are in conflict.

Now we claim that there is a $p_i$-$ViewSum.\update$ event between $s_j$ and $e_i$. Indeed, assume not and let $e'$ be the last $ViewSum.\update$ event by $p_i$ that precedes $e_i$. By our assumption we have $\mu_i(s_j)=e'$.  $e'$ belongs to the last $p_i$-\update event that precedes $U_i$ and hence the color of this \update event is $c_i-1\pmod 3$. Therefore, $val(e')[c_i]=null$. We conclude that $p_j$ read $null$ from $ViewSum[i][c_i]$ in $s_j$ and this contradicts the fact that $p_i$ and $p_j$ are in conflict.

We see that there is a $ViewSum.\update$ event by $p_i$ between $s_j$ and $e_i$, say $e_i'$. That is,
$$s_j<e_i'<e_i.$$
Write $[e_i']=U'$, a $p_i$-\update event and note that $U'<U_i$. Therefore, $e_i'<U_i$ and hence
$$s_j<U_i.$$

Now, let $t_j\in U_j$ be the (unique) $VTS.\update$ event in $U_j$ (line 8 in the code) and write $val(t_j)[i]=(x,y)$. Observe that since $t_j\in U_j$, $(x,y)$ is also the value of $flag_j.vts[i]$. Let $s_i\in U_i$ be the (unique) $VTS.\scan$ event in $U_i$ (line 5) and note that since $e_i'<U_i$, $e_i'<s_i$. By the code and by our conclusions we have:
$t_j<s_j<e_i'<s_i$ thus $$\mu_j(s_i)\geq t_j.$$
Note that there is at most one $VTS.\update$ event by $p_j$ between $t_j$ and $s_i$ since otherwise, we would have $\beta_j(S)\neq U_j$. Furthermore, if there is such an event, it writes to $VTS[j][i]$: $(y,z)$ for some vertex $z\in V_G$ (consider the \newts code). Let $(a,b)$ denotes the value of $VTS[i][j]$ before the execution of $U_i$.

\begin{itemize}
\item[Case 1.] $\mu_j(s_i)=t_j$ and hence $p_i$ reads in $s_i$ from $VTS[j][i]$: $(x,y)$. Thus, $p_i$ writes in $U_i$ to $Flags[i].vts[j]:\newts((x,y),(a,b))=(b,next(x,y))$. Since $(next(x,y),y)\in E_G$, $flag_j.vts[i].new<_{ts} flag_i.vts[j].new$ as required.

\item[Case 2.] $\mu_j(s_i)>t_j$ and hence $p_i$ reads in $s_i$ from $VTS[j][i]$: $(y,z)$. In this case $p_i$ writes in $U_i$ to $Flags[i].vts[j]:\newts((y,z),(a,b))=(b,next(y,z))$. Since also $(next(y,z),y)\in E_G$, $flag_j.vts[i].new<_{ts} flag_i.vts[j].new$. We see that the lemma holds in this case as well.
\end{itemize}
\end{proof}

The previous lemma shows that if two processes are in conflict and their $flags$ provide contradicting information, the $flag.vts$ fields determined correctly which among the two processes is the reliable one. The conclusion is that relation $<_S$ determines correctly which process presents the most up-to-date view in its $flag.ans$ field.

\begin{lemma}
\label{alpha-implies-c}
Let $p_i$ and $p_j$ be two processes. Then, $(m_i,i)<(m_j,j) \ifff i<_S j$. 
\end{lemma}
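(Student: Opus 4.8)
The plan is to prove the biconditional $(m_i,i)<(m_j,j) \iff i<_S j$ by proving both directions, using the lemmas already established plus Corollary~\ref{no-2-cycles-in-<C}. For the forward direction, assume $(m_i,i)<(m_j,j)$. I split into two cases according to whether $p_i$ and $p_j$ are in conflict. If they are \emph{not} in conflict, this is exactly Lemma~\ref{not-in-conflict}, so $i<_S j$ immediately. If they \emph{are} in conflict, then neither $U_i$ nor $U_j$ is an initial event (since $(m_i,i)<(m_j,j)$ would otherwise force $m_i=m_j=0$ and hence both initial, contradicting being in conflict, by the initial-values argument used in Lemma~\ref{not-in-conflict}). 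Let $e_i,e_j$ be the $ViewSum.\update$ events in $U_i,U_j$; these are ordered, and WLOG I consider the two subcases $e_j<e_i$ and $e_i<e_j$ separately (they are not symmetric because $(m_i,i)<(m_j,j)$ is not).

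In the subcase $e_j<e_i$: by the previous (unnumbered) lemma, $flag_j.vts[i].new<_{ts} flag_i.vts[j].new$, and by Lemma~\ref{write-after-knows}(1), since $(m_i,i)<(m_j,j)$, we get $(j,c_j)\in flag_i.winners$. Together with the conflict hypothesis, these are precisely the three conditions of clause~4 in Definition~\ref{<_S}, so $i<_S j$. In the subcase $e_i<e_j$: now apply the previous lemma with the roles of $i$ and $j$ swapped to get $flag_i.vts[j].new<_{ts} flag_j.vts[i].new$, and apply Lemma~\ref{write-after-knows}(2) (again with roles swapped; here $p_j$ reads $m_i$ from $ViewSum[i][c_i]$ and since $(m_j,j)>(m_i,i)$) to conclude $(i,c_i)\in flag_j.losers$. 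Together with the conflict hypothesis these match clause~3 of Definition~\ref{<_S}, so again $i<_S j$. This completes the forward direction.

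For the converse, I argue by contraposition combined with Corollary~\ref{no-2-cycles-in-<C} and the totality of the lexicographic order. Suppose $i<_S j$ but $\neg\big((m_i,i)<(m_j,j)\big)$. Since $i\neq j$ the pair $(m_i,i)\neq(m_j,j)$, so $(m_j,j)<(m_i,i)$. By the forward direction already proved (applied with $i$ and $j$ interchanged), $(m_j,j)<(m_i,i)$ gives $j<_S i$. But then both $i<_S j$ and $j<_S i$ hold, contradicting Corollary~\ref{no-2-cycles-in-<C}. Hence $(m_i,i)<(m_j,j)$, proving the converse.

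The main obstacle is the conflict subcase of the forward direction: one must correctly line up the hypotheses of the preceding ``timestamp'' lemma and of Lemma~\ref{write-after-knows} with the exact clauses (3 versus 4) of Definition~\ref{<_S}, being careful that the asymmetry of $(m_i,i)<(m_j,j)$ is matched by the asymmetry of $e_j<e_i$ versus $e_i<e_j$ --- in particular checking that in the $e_i<e_j$ subcase it is genuinely $p_j$ who is the ``reliable'' (later-writing) process and that $p_j$'s flag then records $(i,c_i)\in flag_j.losers$ rather than in $winners$, which is where Lemma~\ref{write-after-knows}(2) with swapped roles is needed. The converse direction is then essentially free once totality of the lexicographic order and Corollary~\ref{no-2-cycles-in-<C} are invoked, so no serious difficulty is expected there. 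I would also remark that Lemma~\ref{alpha-implies-c} combined with Lemma~\ref{counters-means-all} is exactly what is needed to show $<_S$ is a total preorder with a maximal element and that the value returned by $\findmax$ is the one whose $\alpha_k$'s dominate, setting up the linearization argument that follows.
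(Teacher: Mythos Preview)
Your proof is correct and follows essentially the same approach as the paper's: split the forward direction into the non-conflict case (Lemma~\ref{not-in-conflict}) and the conflict case (timestamp lemma plus Lemma~\ref{write-after-knows}), then derive the converse from the forward direction via Corollary~\ref{no-2-cycles-in-<C}. The only noteworthy difference is that you explicitly treat the two subcases $e_j<e_i$ and $e_i<e_j$ separately, correctly observing that they invoke different clauses (4 versus 3) of Definition~\ref{<_S}; the paper instead writes ``w.l.o.g.\ $e_j<e_i$'', which is a bit glib since the hypothesis $(m_i,i)<(m_j,j)$ breaks the symmetry, so your version is slightly more careful. One minor remark: your parenthetical argument that in the conflict case neither $U_i$ nor $U_j$ can be initial is not quite complete as stated (it only rules out $U_j=I_j$ directly), but this is harmless since the timestamp lemma itself already handles the initial-event issue internally.
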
 

\begin{proof}
First assume that $(m_i,i)<(m_j,j)$ and we shall prove that $i<_S j$. If $p_i$ and $p_j$ are not in conflict, then this is the case of lemma \ref{not-in-conflict}. If $p_i$ and $p_j$ are in conflict, let $e_i$ be the \update of $ViewSum$ in $U_i$ and let $e_j$ be the \update of $ViewSum$ in $U_j$. Assume w.l.o.g. that $e_j<e_i$. By lemma \ref{write-after-knows}, $(j,c_j)\in flag_i.winners$. By the previous lemma $flag_j.vts[i].new<_{ts}flag_i.vts[j].new$ thus by definition, $i<_S j$.

Now, for the other direction, assume that $i<_Sj$. If $(m_j,j)<(m_i,i)$, then we get that also $j<_S i$ in contradiction to corollary \ref{no-2-cycles-in-<C}. Thus, $(m_i,i)<(m_j,j)$ as required.
\end{proof}

It is easy to see that any two \update events $U$ and $U'$, are comparable in $\leq_\alpha$. For verifying this observation, assume w.l.o.g. that the $V.\scan$ event in $U'$ occurs after the $V.\scan$ event in $U$. Clearly, $U\leq_\alpha U'$ in this case. Therefore, we conclude:.

\begin{corollary}
\label{main-corollary}
There is a maximal element in $<_S$ and hence, the \findmax procedure in $S$ returns some $j<n$.
\end{corollary}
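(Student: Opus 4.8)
The plan is to deduce Corollary \ref{main-corollary} from Lemma \ref{alpha-implies-c} together with the observation stated just before it, namely that any two \update events are comparable in $\leq_\alpha$. First I would make precise the relevant order: for process ids $i,j$, compare the pairs $(m_i,i)$ and $(m_j,j)$ in the lexicographic order on $\f N\times\{0,\dots,n-1\}$. Since the second coordinates $i$ and $j$ are distinct whenever $i\neq j$, this lexicographic order is a \emph{strict total order} on the set $\{(m_k,k) : k<n\}$ — any two distinct elements are comparable, and it is irreflexive and transitive. Hence the map $k\mapsto (m_k,k)$ embeds $\{0,\dots,n-1\}$ into a finite totally ordered set.

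Next I would invoke Lemma \ref{alpha-implies-c}, which gives the equivalence $(m_i,i)<(m_j,j)\ifff i<_S j$ for all pairs of process ids. This says exactly that the relation $<_S$ on $\{0,\dots,n-1\}$ is order-isomorphic to the strict total order just described (pulled back along $k\mapsto(m_k,k)$). In particular $<_S$ is itself a strict total order on a finite, nonempty set $\{0,\dots,n-1\}$ (note $n\geq 1$). A strict total order on a finite nonempty set always has a greatest element: take the id $j$ for which $(m_j,j)$ is lexicographically largest, which exists because the set of pairs is finite and nonempty. For this $j$ there is no $i\neq j$ with $(m_j,j)<(m_i,i)$, hence by Lemma \ref{alpha-implies-c} no $i\neq j$ with $j<_S i$; that is precisely the definition of $j$ being maximal in $<_S$.

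Therefore the \findmax procedure invoked in $S$, which by its specification returns a maximal element of $<_S$, is well defined and returns this $j<n$. I would also remark that Corollary \ref{no-2-cycles-in-<C} is subsumed here: a total order has no $2$-cycles, so the maximal element is in fact unique, though uniqueness is not needed for the statement. I do not anticipate a genuine obstacle in this final step — all the substantive work (that $<_S$ agrees with the lexicographic comparison of the $m_k$'s, which in turn required the coloring, the counters, and the timestamp arguments of the earlier lemmas) has already been done; the only care needed is to state cleanly that finiteness plus totality yields a maximum, and to connect "lexicographically largest pair" with "$<_S$-maximal id" via Lemma \ref{alpha-implies-c}.
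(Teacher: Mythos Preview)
Your proposal is correct and follows essentially the same route as the paper: pick the id $j$ with lexicographically largest pair $(m_j,j)$ and apply Lemma~\ref{alpha-implies-c} to conclude $j$ is $<_S$-maximal. The only difference is that you spell out explicitly that $<_S$ is a strict total order on a finite set, whereas the paper states the argument in two lines; your mention of the $\leq_\alpha$-comparability observation is harmless but not actually needed here.
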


\begin{proof}
Take (the unique) $j$ such that $(m_j,j)$ is maximal at the lexicographic order over $\{(m_0,0),(m_1,1),\dots,(m_{n-1},n-1)\}$. By the previous lemma, $j$ is maximal in relation $i<_S j$. 
\end{proof}

Now we are ready to show that $\tau$ is a linearizable. Wew define a linear ordering $\prec$ on the set of all high-level events in $\tau$. First, we define $\prec$ over the \update events. Then, we define $\prec$ between update and \fscan events and finally, we define $\prec$ over \fscan events.

\begin{enumerate}
\item For two \update operations $U,U'$, we set $U\prec U'$ if the write to $V$ in  $U$ precedes the write to $V$ in $U'$. That is, the executions of the $V.$\update operations are the linearization points of the \update events.

\item Let $S$ be an \fscan event. For each \update event $U$, we decide if $U\prec S$ or $S\prec U$ by choosing an \update event to linearize $S$ immediately after it. 

For each process $p_i$, write  $U_i=\alpha_i(S)$. We linearize $S$ immediately after the initial \update events $U_0,\dots,U_{n-1}$. More precisely, we linearize $S$ after the maximal element in $\prec$ over the set $\{U_0,\dots,U_{n-1}\}$.

\item It is left to define $\prec$ over the \fscan events. First, we consider pairs of \fscan events $S,S'$ such that $S$ was linearized after an \update event $U$ and $S'$ was linearized after an \update event $U'\neq U$. In this case, if $U\prec U'$, we set $S\prec S'$.

Now, for each \update event $U$, we take the \fscan events linearized after $U$, $S_1,\dots,S_m$, and we extend $\prec$ on these events in some arbitrary way that extends $<$ over $S_1,\dots,S_m$.
\end{enumerate}

It is easy to verify that $\prec$ is a linear ordering, now we verify that $\prec$ extends $<$. Consider two high-level events $A<B$. We shall prove that $A\prec B$. The claim is trivial when $A$ and $B$ are \update events as these events were linearized in correspondence to an execution of an atomic instruction. We need to deal with the cases that $A$ and $B$ are both \fscan events, or one of them is an \fscan event and the other is an \update event.
\begin{enumerate}

\item[Case 1.] $A=U$ an \update event, say by $p_i$ and $B=S$ an \fscan event. For each process $p_k$, let $U_k$ denote the $p_k$-\update event that wrote to $Flags[k]$ the value read in $S$. i.e. $U_k=\beta_k(S)$. Note that $U\leq U_i$. Assume that the procedure \findmax in $S$ returned $j$ thus $i\leq_S j$ and $\alpha_i(S)=\alpha_i(U_j)$. Use lemmas \ref{alpha-implies-c}, \ref{counters-means-all} and \ref{alpha-is-a-fix-point} to observe that:
$$U\leq U_i=\alpha_i(U_i)\leq \alpha_i(U_j)=\alpha_i(S).$$
 Recall that $S$ was linearized after $\alpha_i(S)$ and hence, since $\prec$ extends $<$ over \update events,
$$U\preceq U_i=\alpha_i(U_i)\preceq\alpha_i(U_j)\prec S$$
which implies that $U\prec S$. 

\item[Case 2.] $A=S$ an \fscan event and $B=U$ an \update event, say by $p_i$. Note that since $S<U$, $U\neq I_i$ the initial $p_i$-\update event. To show that $S$ was linearized before $U$, we need to show that for each process $p_k$, $\alpha_k(S)\prec U$. 

Assume that the procedure \findmax in $S$ returns $j$ and write $U_j=\beta_j(S)$. For a process $p_k$, write $U_k=\alpha_k(S)=\alpha_k(U_j)$. If $U_j=I_j$, then $U_k=I_k$ and then it is clear that $U_k\prec U$ as required. Otherwise, let $e_k$ be the write to $V$ in $U_k$, let $r$ be the $V.$\scan event in $U_j$ and let $e$ be the $V.$\update event in $U$. Obviously, $e_k<r$. Since $\beta_j(S)=U_j$, $\neg(S<r)$. But since $S<U$, we conclude that $r<e$. As a result, $e_k<e$ which implies that $U_k\prec U$ as required.

\item[Case 3.] $A=S$ and $B=S'$ are both \fscan event. For proving that $S$ is linearized before $S'$ we show that for each process $p_i$, $\alpha_i(S)\preceq\alpha_i(S')$. Assume that the procedure \findmax in $S$ returns $j$ and the procedure \findmax in $S'$ returns $k$. Write $U_j=\beta(S)$ and $U_k'=\beta_k(S')$. Hence, $\alpha_i(S)=\alpha_i(U_j)$ and $\alpha_i(S')=\alpha_i(U_k')$. Write $U_j'=\beta_j(S')$ and use lemmas \ref{alpha-implies-c} and \ref{counters-means-all} to conclude that $\alpha_i(U_j')\leq \alpha_i(U_k')$. Since $S<S'$, $U_j\leq U_j'$ thus by lemma \ref{<-implies-<alpha} we get, $$\alpha_i(S)=\alpha_i(U_j)\leq\alpha_i(U_j')\leq\alpha_i(U_k)=\alpha_i(S).$$ Hence $\alpha_i(S)\leq\alpha_i(S')$ and $\alpha_i(S)\prec\alpha_i(S')$ follows.

\end{enumerate}

It is left to prove that the properties of the sequential specification are satisfied. It is easy to see that each \fscan event $S$ returns $F(val(\alpha_0(S)),\dots,val(\alpha_{n-1}(S)))$. Therefore, we need to verify that for each process $p_i$, $\alpha_i(S)$ is the maximal $p_i$-\update event that precedes $S$ in $\prec$. Since $S$ was linearized after the events $\alpha_0(S),\dots,\alpha_{n-1}(S)$, clearly $\alpha_i(S)\prec S$ for each process $p_i$. 

Towards a contradiction, assume that for some process $p_i$, $U\neq\alpha_i(S)$ is the maximal $p_i$-\update event that precedes $S$ in $\prec$. Hence,
$$ \alpha_i(S)\prec U\prec S.$$ 
We conclude that there is a process $p_k$ such that
$$\alpha_i(S)\prec U\prec \alpha_k(S)$$ since otherwise, $S$ would have linearized before $U$.
Note that $\alpha_k(S)\neq I_k$. Assume that the procedure \findmax in $S$ returns $j$ and write $U_j=\beta_j(S)$. Thus, $\alpha_k(S)= \alpha_k(U_j)$. Since $\alpha_k(S)$ is not the initial $p_k$-event, also $U_j\neq I_j$. 

Write $\alpha_i(S)=\alpha_i(U_j)=U_i$ and $\alpha_k(S)=\alpha_k(U_j)=U_k$. Let $e_i$ be the $V.\update$ operation in $U_i$, let $e$ be the $V.\update$ operation in $U$ and let $e_k$ be the $V.\update$ operation in $U_k$. Since $U_i\prec U\prec U_k$, we have
$$e_i<e<e_k.$$ 
Now, let $r$ be the $V.\scan$ event in $U_j$. By definition, $\mu_k(r)\in U_k$ thus $\mu_k(r)=e_k$ and in particular 
$$e_k<r.$$
As a result, $e_i<e<r$ thus $\mu_i(r)\neq e_i$ in contradiction to $\alpha_i(U_j)=U_i$.

\section{Conclusions}
\label{conclusions}

The snapshot object is a special case of the $F$-snapshot object while choosing the parameter $F$ to be the identity function. The $F$-snapshot object also generalizes the signaling object \cite{AGL1} from the case that there only two processes to an arbitrary number of processes. We present here a wait free solution to this problem. Our algorithm uses several snapshot objects thus its complexity measures depend on the exact implementations of this objects.

When processes communicate through shared read/write registers, any \fscan implementation must include $\Omega(n)$ operations addressed to the shared memory \cite{Jay-lower-bound}. Furthermore, regarding the snapshot object, Israeli and Shirazi \cite{IS} proved the same lower bound for \update implementations. As the snapshot object is a special case of the $F$-snapshot object, this lower bound holds for the $F$-snapshot object as well. 

The exact implementations for the snapshot objects that are used in our algorithm, determine the time complexity of the algorithm. However, since the $Flags$ object is accessed during \fscan operations, it is required to use a snapshot implementation that employs only bounded registers, in case that only finitely many different values are invoked by the processes. As an example, the first algorithm in \cite{snap2} violates this requirement since it uses a field named $seq$ that grows infinitely, while the second algorithm in \cite{snap2} satisfies this property. 

For efficiency, we can use the implementation by Attiya and Rachman in \cite{nlogn}. In section 4.4 of \cite{nlogn}, the authors explain how to transform their algorithm into a snapshot implementation which satisfies the requirements discussed here. Namely, into a snapshot implementation that uses only bounded registers, in case that finitely many data values may be written to the segments of the object. Thus, our algorithm can be implemented with time complexity $O(n\log n)$ which is, as far as we know, the time complexity of the most efficient published snapshot algorithm that uses only single-writer registers.

It is known that the snapshot object can be implemented with time complexity $O(n)$ when multi-writers are allowed as Inoue, Masuzawa, Chen and Tokura proved \cite{Lin-snap}. Inoue et al. present an algorithm that solves the lattice agreement problem. Then, the reduction by Attiya, Herlihy and Rachman \cite{LA}, provides a linear snapshot implementation with multi-writer registers. The problem is that this reduction requires unbounded memory. Hence, the $F$-snapshot limitations forbid using this implementation for the $Flags$ snapshot object in our algorithm. Therefore, the question if there is a linear $F$-snapshot implementation using multi-writers is not answered here, although there is a linear snapshot implementation that uses multi-writer registers.

By the essence of the problem, a natural complexity measure for an $F$-snapshot implementation is the size of the flags - the bounded registers that are accessed during an \fscan operation. This ``flags complexity" depends on: $n$ - the number of processes and $|D|$ - the number of distinct values that $F$ may return. Since at least $\log |D|$ bits are required to represent $|D|$ different values, it is not difficult to prove that the flags complexity of any solution is  $\Omega(n\log{|D|})$. For proving this claim, consider an $n$-variable function $F$ that satisfies the following: If we assign values to $n-1$ variables, then any element from $D=Rng(F)$ can be obtained by some assignment to the last variable. For example, the function $f(a_1,\dots,a_n)=a_1+\dots+a_n \pmod D$ satisfies this requirement. Now, since each process can execute an \update operation and change the result of an ensuing \fscan event into any element from $D=Rng(F)$, the size of each flag is at least $\log |D|$ bits and the lower bound holds. However, we do not know to prove any non-trivial lower bound on the size of the flags.

For calculating the flags complexity of our algorithm, for convenience, we may assume that $D=\{0,\dots,|D|-1\}$. Otherwise, we can take a bijective function $f:D\into \{0,\dots,|D|-1\}$ and replace the function $F$ with $f\circ F$. The $flag$ type consists of several fields. The field $flag.ans$ contains elements from $D$ and hence it requires $\log |D|$ bits. The size of the other fields depends only on $n$ - the number of processes. The set fields: $flag.winners,flag.losers$ can be represented using $3n$ bits, when each bit corresponds to a pair $(i,c)$ of process id and a color. Therefore, all  other fields are of size $O(n)$ thus the values that the processes write to $Flags$ require $O(n+\log |D|)$ bits. However, as $Flags$ is a snapshot object, the implementation for this object uses additional fields and the largest one contains a view. Namely, it contains a vector of $n$ elements, each entry store a value of the type that the processes write to the snapshot object. Thus, the size of each $flag$ is actually $O(n^2+n\log|D|)$ bits. The total size of the flags - which is the flags complexity of the algorithm is $O(n^3+n^2\log|D|)$. We believe that this can be significantly improved.

In our algorithm, the \fscan procedure accesses only bounded registers due to the problem constrains and the \update procedure access unbounded registers (otherwise the problem is unsolvable).  The segments of the snapshot object $V$ store elements from $\mathit{Vals}$ (which might be infinite), and counters that grow infinitely. Hence, if we take a function $F$ with a finite domain, the \update procedure will still access unbounded registers. Thus, in those cases, it is better to use some other implementation such as the bounded version of the algorithm in \cite{nlogn}. An interesting question that arises is whether there is a $F$-snapshot algorithm that satisfies both properties:

\begin{enumerate}
\item If $F$ has a finite range, then the \fscan procedure accesses only bounded registers. 
\item If $F$ has a finite domain, then only bounded registers are accessed. 

\end{enumerate}

\end{document}